\title{
\bf Consensus on Open Multi-Agent Systems \linebreak Over Graphs Sampled from Graphons
}
\author{Renato Vizuete and Julien M. Hendrickx
\thanks{*This work was supported by F.R.S.-FNRS via the \emph{KORNET} project, and by the \emph{SIDDARTA} Concerted Research Action (ARC) 
of the Fédération Wallonie-Bruxelles.}
\thanks{R.~Vizuete and J.~M.~Hendrickx are with ICTEAM institute, UCLouvain, B-1348, Louvain-la-Neuve, Belgium. R.~Vizuete is a FNRS Postdoctoral Researcher - CR.
{\tt\small renato.vizueteharo@uclouvain.be},
{\tt\small julien.hendrickx@uclouvain.be}\protect.}}
\newcommand{\vertiii}[1]{{\left\vert\kern-0.25ex\left\vert\kern-0.25ex\left\vert #1 
    \right\vert\kern-0.25ex\right\vert\kern-0.25ex\right\vert}}
\newcommand{\ones}{\mathds{1}}
\newcommand{\SBM}{\text{SBM}}
\newtheorem{definition}{Definition}
\newtheorem{theorem}{Theorem}
\newtheorem{proposition}{Proposition}
\newtheorem{lemma}{Lemma}
\newcommand{\erf}{\text{erf}}
\newcommand{\nm}{n_{m}}
\newcommand{\nM}{n_{M}}
\newcommand{\prt}[1]{\left(#1\right)}
\newcommand{\brk}[1]{\left[#1\right]}
\newcommand{\abs}[1]{\left|#1\right|}
\newcommand{\norm}[1]{\abs{\abs{#1}}}
\newcommand{\R}{\mathbb R}
\newcommand{\N}{\mathbb N}
\newcommand{\Ep}[1]{\mathbb{E}\left[#1\right]}
\begin{document}

\maketitle
\thispagestyle{empty}

\begin{abstract}

We show how graphons can be used to model and analyze open multi-agent systems, which are multi-agent systems subject to arrivals and departures, in the specific case of linear consensus.
First, we analyze the case of replacements, where under the assumption of a deterministic interval between two replacements, we derive an upper bound for the disagreement in expectation. Then, we study the case of arrivals and departures, where we define a process for the evolution of the number of agents that guarantees a minimum and a maximum number of agents. Next, we derive an upper bound for the disagreement in expectation, and we establish a link with the spectrum of the expected graph used to generate the graph topologies. Finally, for stochastic block model (SBM) graphons, we prove that the computation of the spectrum of the expected graph can be performed based on a matrix whose dimension depends only on the graphon and it is independent of the number of agents. 
\end{abstract}

\section{Introduction}

Open multi-agent systems are a framework used to analyze networks subject to arrivals, departures or replacements of agents at a rate similar to the scale time of the process \cite{hendrickx2016open,hendrickx2017open}. This type of systems are essentially characterized by the agent internal dynamics, the evolution of the network and the arrivals and departures \cite{vizuete2024trends}. Due to the complexity of the system, most of the works focus mainly on the agent internal dynamics and the processes for arrivals and departures, 
neglecting the influence of the network dynamics (changes on the set of nodes and connections). This is usually done by considering \emph{trivial} dynamics like complete graphs \cite{monnoyer2021random,monnoyer2020open,vizuete2022resource}, bounds on the algebraic connectivity or diameter \cite{franceschelli2020stability,deplano2024stability} or just connectivity at all time instants \cite{deplano2025optimization}. Unfortunately, this fails to model more realistic open multi-agent systems since clearly the network dynamics can have a significant influence on the evolution of the states of the agents.

The analysis of the network dynamics is an extensive area of research and one of the main tools used to generate dense graphs with time-varying size are \emph{graphons} developed in \cite{lovasz2006limits,lovasz2012large}. Graphons could provide an important framework to analyze open multi-agent where the dynamics of the network play an important role, but, most of the theory of graphons has been focused on applications regarding the adjacency matrix \cite{gao2019graphon,parise2023graphon,vizuete2020graphon} and few works involve the Laplacian matrix \cite{vizuete2021laplacian,petit2021random,bramburger2023pattern}. Unfortunately, most of the dynamics over networks, including the  consensus-type dynamics, are functions of the Laplacian matrix, but due to the complexity of the Laplacian graphon operator, most of the works avoid using its spectrum for the analysis
\cite{kuehn2019power,bonnet2022consensus,prisant2024opinion}.  

We use here graphons to study open multi-agent systems over Laplacian matrix based dynamics. In our analysis, we will use an approach based on descriptors, which are scalar quantities associated with the dynamics of the system \cite{vizuete2024trends}. Depending on the type of application, the choice of the descriptor is natural and its objective is to distinguish the behavior of the system due to arrivals, departures or replacements. For instance, in consensus, the disagreement $
V(x)=\frac{1}{n}\sum_i (x_i-\bar x)^2
$ with $\bar x=\frac{1}{n}\sum_i x_i$ has been used in \cite{monnoyer2020open}, while in the case of epidemics, a Lyapunov function of the form $V(x)=\frac{1}{n}\norm{x}^2$ has been used in \cite{vizuete2024SIS}.

In this work, we analyze the linear consensus problem on open multi-agent systems when the graph topologies are sampled from a graphon. We use the disagreement as a descriptor and derive upper bounds for its asymptotic behavior in expectation. An analysis of consensus using the disagreement as a descriptor has been performed in \cite{monnoyer2020open}, where the authors study pairwise gossip interactions at discrete time instants in a complete graph. In our case, we analyze the linear consensus dynamics based on a Laplacian matrix, where the state of the system evolves in continuous time, and changes at discrete time instants corresponding to the potential events in open multi-agent systems (i.e., replacements, arrivals, departures). Furthermore, to the best of our knowledge, we consider for the first time, graph topologies sampled from graphons in open multi-agent systems, where the spectrum of the sampled graphs play an important role on the performance of the system and the derivation of upper bounds for its analysis.

\section{Problem formulation}

\subsection{Model and descriptor}

An undirected graph is defined as a pair $\mathcal{G}=(\mathcal{V},\mathcal{E})$ where $\mathcal{V}=\{1,\ldots,n\}$ is a finite set of vertices or nodes and 
$\mathcal{E}\subseteq  \mathcal{V} \times \mathcal{V}$ is the set of edges. 
The adjacency matrix of a graph $A=[a_{ij}]\in \mathbb{R}^{n\times n}$ is defined by $a_{ij}>0$ if $(i,j)\in E$ and $a_{ij}=0$ otherwise. The Laplacian matrix is defined as $L=D-A$, where $D=\text{diag}(d_1,\ldots,d_n)$ and each $d_i$ is the $i$-th row sum of $A$.

We consider a standard linear consensus, where the dynamics of each agent are given by
$$
\dot x_i=\sum_{j=1}^n a_{ij}(t)(x_j(t)-x_i(t)),
$$
and the dynamics of all the network can be expressed as 
\begin{equation}\label{eq:linear_consensus}
\dot{x}(t)=-L(t)x(t),    
\end{equation}
where $L(t)$ is the Laplacian matrix at the time instant $t$.
The agents follow a consensus like dynamics, 
and that is why we are going to use the disagreement as a descriptor \cite{vizuete2024trends} for the analysis in an open system:
$$
V(x(t))=\frac{1}{n}\norm{x(t)}^2-\bar x^2(t),
$$
where $\bar x^2=\prt{\frac{1}{n}\sum_{i=1}^n x_i}^2$ and $\norm{\cdot}$ is the Euclidean norm. When the system reaches consensus, the disagreement satisfies:
$$
\lim_{t\to\infty}V(x(t))=0,
$$
such that any deviation from zero is due exclusively to the impact of arrivals, departures or replacements in the system. In the rest of the paper, we will use $V(t)$ to lighten notation.

We consider that an arrival, departure or replacement, which we call \emph{event}, occur according to a discrete evolution of time where each time-step $k\in\N$ corresponds to the time instant at which the $k$-th event occurs. In addition, we assume that the Laplacian matrix $L$ can only change at a $k$-th event and remains constant between two time instants associated with events $k$ and $k+1$. Moreover, between the events $k$ and $k+1$, the system evolves in continuous time according to \eqref{eq:linear_consensus}. Notice that we use $t$ to denote the time and $k$ to denote the order of the sequence of events from $t=0$. Along this work, the notation $X(t^k)$ will be used to specify the value of $X$ at the time instant associated with the $k$-th event. During a departure, the agent that leaves the system is chosen uniformly among the current agents in the network. During an arrival, the value of the joining agent is taken from a continuous distribution with zero mean and variance $\sigma^2$. 

\begin{proposition}\label{prop:consensus_mappings}
If $x$ follows the linear consensus \linebreak dynamics \eqref{eq:linear_consensus},
the disagreement $V$ satisfies:
\begin{equation}\label{eq:consensus_continuous}    
 V(t^{k+1})\le V(t^k)e^{-2\lambda_2(t^k)\Delta t^{k,k+1}};
\end{equation}
\begin{equation}\label{eq:consensus_departure}
\Ep{V^+(t^k)|V(t^k),\text{Dep}_n}=\prt{1-\frac{1}{(n-1)^2}}V(t^k);
\end{equation}
\begin{equation}\label{eq:consensus_arrival}
\Ep{V^+(t^k)|V(t^k),\text{Arr}_n}\le \frac{n}{n+1}V(t^k)+\frac{\sigma^2}{n+1};
\end{equation}
\begin{equation}\label{eq:consensus_replacement}
\Ep{V^+(t^k)|V(t^k),\text{Rep}_n}\le \frac{n^2-n-1}{n^2}V(t^k)+\frac{n^2-1}{n^3}\sigma^2,
\end{equation}
where $V(t^k)$ and $V^+(t^k)$ denote the values of $V$ before and after the occurrence of event $k$ respectively, $\lambda_2(t^k)$ is the second smallest  Laplacian eigenvalue at the time instant $t^k$, and $\Delta t^{k,k+1}$ is the interval of time between  $t^k$ and $t^{k+1}$. 
\end{proposition}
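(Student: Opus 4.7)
The plan is to prove the four inequalities separately: \eqref{eq:consensus_continuous} for the continuous flow \eqref{eq:linear_consensus} on $[t^k, t^{k+1})$, and \eqref{eq:consensus_departure}--\eqref{eq:consensus_replacement} for the jump of $V$ at the event itself.

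For the continuous part, I would first rewrite $V = \tfrac{1}{n}\norm{x - \bar x \ones}^2$ (which agrees with the paper's definition by direct expansion). Because the graph is undirected, $L\ones = 0$ and $\ones^\top L = 0$, so $\bar x$ is conserved between events; setting $y = x - \bar x \ones$ yields $\dot y = -L y$ and $y \perp \ones$. Differentiating $V = \tfrac{1}{n}\norm{y}^2$ and applying the Rayleigh characterization $y^\top L y \ge \lambda_2 \norm{y}^2 = n \lambda_2 V$ on $\ones^\perp$ gives $\dot V \le -2\lambda_2(t^k)\, V$. Since $L$ is constant on $[t^k, t^{k+1})$, Grönwall's inequality produces \eqref{eq:consensus_continuous}.

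For the three discrete events, I would parametrize the disagreement through $S = \sum_i x_i$ and $Q = \sum_i x_i^2$, so that $V = \tfrac{Q}{n} - \tfrac{S^2}{n^2}$, and write $V^+$ in terms of the updated $(S^+, Q^+, n^+)$ dictated by the event. For a \emph{departure} with agent $i$ uniform in $\{1,\ldots,n\}$, $(S^+, Q^+, n^+) = (S - x_i,\, Q - x_i^2,\, n-1)$; averaging over $i$ using the identities $\sum_i x_i^2 = Q$ and $\sum_i (S-x_i)^2 = (n-2)S^2 + Q$, together with $nQ - S^2 = n^2 V$, collapses the expression to $(1 - 1/(n-1)^2)V$, giving \eqref{eq:consensus_departure} as an equality. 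For an \emph{arrival}, $(S^+, Q^+, n^+) = (S + x_{\rm new},\, Q + x_{\rm new}^2,\, n+1)$, and expanding $V^+$ and taking the expectation over $x_{\rm new}$ (with zero mean and second moment $\sigma^2$) produces an exact formula of the form $\tfrac{n}{n+1} V + \tfrac{n(\bar x^2 + \sigma^2)}{(n+1)^2}$, which upper-bounds to the right-hand side of \eqref{eq:consensus_arrival}. The \emph{replacement} bound \eqref{eq:consensus_replacement} is obtained by composing the two operations (uniform removal followed by an independent insertion) under the same algebraic scheme, keeping $n^+ = n$.

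The Lyapunov step for \eqref{eq:consensus_continuous} is short; the real work is the bookkeeping in the arrival and replacement expansions. The exact conditional expectations carry a cross-term proportional to $\bar x^2 = S^2/n^2$ that does not appear on the right-hand sides of \eqref{eq:consensus_arrival}--\eqref{eq:consensus_replacement}, so the main difficulty is controlling this term, using the translation invariance of the consensus dynamics and of $V$ (which allows one to normalize $\bar x = 0$) together with inequalities such as $n/(n+1)^2 \le 1/(n+1)$, while keeping the coefficients tight enough to land on exactly $\tfrac{n}{n+1}$, $\tfrac{n^2-n-1}{n^2}$, and $\tfrac{n^2-1}{n^3}$.
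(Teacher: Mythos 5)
Your treatment of \eqref{eq:consensus_continuous} is essentially the paper's own argument: the paper likewise computes $\dot V=-\frac{2}{n}x^TLx$, splits $x$ into $\bar x\ones_n$ plus its orthogonal complement, applies the Rayleigh bound $x_\bot^TLx_\bot\ge\lambda_2\norm{x_\bot}^2$ and concludes with the Comparison Lemma; your Gr\"onwall step is the same reasoning. For \eqref{eq:consensus_departure}--\eqref{eq:consensus_replacement} the paper gives no proof at all and simply cites \cite{monnoyer2020open}, so there you are attempting more than the paper does. Your departure computation is correct and does yield \eqref{eq:consensus_departure} as an identity (the $S^2$ contributions cancel exactly).

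For arrivals and replacements, however, there is a genuine gap, and it sits precisely at the step you describe as ``the main difficulty.'' Your own exact formulas are $\Ep{V^+\mid x,\text{Arr}_n}=\frac{n}{n+1}V+\frac{n(\bar x^2+\sigma^2)}{(n+1)^2}$ and, by the same bookkeeping, $\Ep{V^+\mid x,\text{Rep}_n}=\frac{n^2-n-1}{n^2}V+\frac{n-1}{n^2}\prt{\bar x^2+\sigma^2}$. Comparing with the right-hand sides of \eqref{eq:consensus_arrival} and \eqref{eq:consensus_replacement}, the available slack in the $\sigma^2$ coefficient is only $\frac{\sigma^2}{(n+1)^2}$ (arrival) and $\frac{(n-1)\sigma^2}{n^3}$ (replacement), so both claimed inequalities hold if and only if $n\bar x^2\le\sigma^2$; they fail for states with large mean. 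The remedy you propose, normalizing $\bar x=0$ by translation invariance, is not legitimate: $V$ and the consensus flow are translation invariant, but the event mechanism is not, because the incoming value is drawn from a distribution centered at $0$ rather than at the current average. Shifting the state by $-\bar x\ones_n$ therefore changes the law of $V^+$, which is exactly what your formulas show --- the conditional expectation depends on $\bar x$ and is not a function of $V$ alone. Closing this step requires an extra ingredient, which is what the citation to \cite{monnoyer2020open} supplies: in that setting all agent values (initial and arriving) are zero-mean random variables, so the $\bar x^2$ term is controlled in the relevant expectation (for i.i.d.\ zero-mean values $\Ep{S^2}=n\sigma^2$, which makes your replacement formula consistent with \eqref{eq:consensus_replacement}); the elementary inequality $n/(n+1)^2\le 1/(n+1)$ alone cannot do it.
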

\begin{proof}
Since the graph is symmetric, the function $V(t)$ satisfies during $\Delta t^{k,k+1}$:
\begin{align}
\dot{V}&=-\frac{2}{n}x^TLx\nonumber\\
&=-\frac{2}{n}\prt{x_\bot+\bar x\mathds{1}_n}^TL\prt{x_\bot+\bar x\mathds{1}_n}\nonumber\\
&\le -\frac{2\lambda_2}{n}\norm{x_\bot}^2\nonumber\\
&= -\frac{2\lambda_2}{n}\prt{\norm{x}^2-n\bar x^2}\nonumber\\
&=-2\lambda_2V,\label{eq:ODE_V}
\end{align}
where $\mathds{1}_n$ is a vector of size $n$ constituted of only ones and $x_\bot$ is the projection of $x$ on $\mathds{1}_n$. By the Comparison Lemma, $V$ is upper bounded by the solution of the right-hand side of \eqref{eq:ODE_V}, which yields \eqref{eq:consensus_continuous}.
The inequalities \eqref{eq:consensus_departure}-\eqref{eq:consensus_replacement} have been derived in \cite{monnoyer2020open}.
\end{proof}

\subsection{Sampling from graphons}
The space of all bounded symmetric measurable functions $W:[0,1]^2\rightarrow [0,1]$ is denoted by $\mathcal{W}$ and the elements of this space are called \textit{graphons}, whose name is a contraction of graph-function. 
The degree function of a graphon is \linebreak defined as:
$$
d(x):=\int_0^1W(x,y) dy,
$$
and its infimum is denoted by $\eta_W$.

A graphon $W$ can be used to generate random graphs using a sampling method \cite{lovasz2012large}. In this work, we will consider deterministic latent variables for the sampling, while the stochastic latent variables are left for future work \cite{avella2018centrality}.
\begin{definition}[Sampled Graph \cite{avella2018centrality}]\label{def:sampling}
Given a graphon $W$ and a size $n \in \mathbb{N}$, we say that the graph $G$ is sampled from $W$ if it is obtained through the following process:

\textit{1. Complete Weighted Graph $\bar G$}: 
let us fix deterministic latent variables $u_i=\frac{i}{n}$. We generate the complete weighted graph $\bar G$ with $n$ vertices, whose adjacency matrix is defined as: $\bar A(i,j)=W(u_i,u_j)$ for all $i,j\in\{ 1,\ldots,n\}$.

\textit{2. Simple Graph $G$}: from $\bar G$, we generate the simple graph $G$ with $n$ vertices by connecting each pair of distinct vertices $i\neq j$ with probability $\bar A(i,j)$ independently of the other edges.
\end{definition}

The Laplacian matrix of the complete weighted graph $\bar G$ is denoted by $\bar L=\bar D-\bar A$, where $\bar D=\text{diag}(\bar d_1,\ldots,\bar d_n)$ is the degree matrix. The eigenvalues of $\bar L$ are denoted as $0=\bar \lambda_1\le \bar\lambda_2\le \cdots\le \bar \lambda_n$ and the normalized versions are denoted as $\bar \mu_i=\bar\lambda_i/n$. The notation $L$ with the eigenvalues $0= \lambda_1\le \lambda_2\le \cdots\le  \lambda_n$ and the normalized eigenvalues $\bar \mu_i=\bar\lambda_i/n$ are used for the simple graphs $G$ obtained from a graphon.

Along this work, we consider that all the graph topologies associated with the dynamics \eqref{eq:linear_consensus} are sampled from a fixed graphon $W$ according to Definition~\ref{def:sampling}.

\section{Replacements}\label{sec:replacements}
 First, we consider the case of replacements of agents, which are used to model an arrival followed immediately by a departure \cite{hendrickx2016open,monnoyer2024random} or to approximate processes with similar rates of arrivals and departures such that the variations of the size of the system are almost negligible \cite{carletti2008birth,torok2013opinions,vizuete2024SIS}. 
 
 We consider that the time interval between two replacements is the same and it is given by $\Delta t^{k,k+1}=\gamma/n$, with $\gamma\ge 0$ for all $k\in\N$. This natural scaling with the size of the network $n$ is standard in the case of dense graphs sampled from graphons (see \cite{gao2019spectral,vizuete2020graphon,vizuete2022contributions} in the case of epidemics) which implies that the rate of replacements is proportional to the size of the network (i.e., more agents, more replacements).
Furthermore, we consider that during a replacement, the topology of the network changes and it is sampled again according to Definition~\ref{def:sampling}, independently of all the graph topologies sampled before the time of the replacement. An alternative procedure to generate the graph topology during a replacement would be to re-sample only the connections of the replaced agent. However, this creates technical difficulties since the graph topologies would be time-dependent.

\begin{theorem}\label{thm:replacement}
    For the linear consensus \eqref{eq:lim_n_k} in a system under replacements:
    \begin{equation}\label{eq:thm_replacement}
\limsup_{t\to\infty}\Ep{V(t)}\le\dfrac{\sigma^2(n^2-1)}{n\prt{n^2-(n^2-n-1)\Ep{e^{-2\gamma\mu_2}}}}.  
    \end{equation}
\end{theorem}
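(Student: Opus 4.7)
The plan is to compose the continuous-decay bound \eqref{eq:consensus_continuous} with the replacement bound \eqref{eq:consensus_replacement} into a one-step affine recursion for the expected disagreement just after each event, and then solve the recursion in closed form. Let $V_k := V^+(t^k)$ denote the disagreement immediately after event $k$, and let $\mu_2^{(k)} = \lambda_2^{(k)}/n$ denote the normalized Fiedler value of the graph sampled at event $k$, which governs the dynamics on $[t^k, t^{k+1}]$. Substituting $\Delta t^{k,k+1} = \gamma/n$ into \eqref{eq:consensus_continuous} gives $V(t^{k+1}) \le V_k\, e^{-2\gamma \mu_2^{(k)}}$, and applying \eqref{eq:consensus_replacement} at event $k+1$ yields $\Ep{V_{k+1}\mid V(t^{k+1})} \le \tfrac{n^2-n-1}{n^2} V(t^{k+1}) + \tfrac{n^2-1}{n^3}\sigma^2$.

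The key analytical step is to take full expectations, using that at event $k$ the new graph is drawn independently of all past randomness and, in particular, independently of $V_k$. This independence lets the expectation of the product factor: $\Ep{V(t^{k+1})} \le \Ep{V_k}\,\Ep{e^{-2\gamma \mu_2}}$, where the latter factor does not depend on $k$ because all sampled graphs share the same distribution induced by Definition~\ref{def:sampling}. Composing with the replacement bound produces the scalar affine recursion
$$
\Ep{V_{k+1}} \le a\, \Ep{V_k} + b, \qquad a = \frac{n^2-n-1}{n^2}\,\Ep{e^{-2\gamma \mu_2}}, \quad b = \frac{(n^2-1)\sigma^2}{n^3}.
$$

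Since $\Ep{e^{-2\gamma \mu_2}} \le 1$ and $(n^2-n-1)/n^2 < 1$, we have $a < 1$, so iterating yields $\limsup_{k\to\infty}\Ep{V_k} \le b/(1-a)$; clearing the common factor $n^2$ in numerator and denominator reproduces the right-hand side of \eqref{eq:thm_replacement}. To pass from event times to arbitrary $t$, observe that on $[t^k,t^{k+1})$ the continuous inequality \eqref{eq:consensus_continuous} implies $V(t) \le V_k$, so $\limsup_{t\to\infty}\Ep{V(t)} \le \limsup_{k\to\infty}\Ep{V_k}$, which is the stated bound.

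The main subtlety is the independence bookkeeping underlying the product factorization: although the new graph and the replacement both occur at event $k$, they involve disjoint sources of randomness, and the freshly sampled graph only acts on the state after $V_k$ has been determined. Being precise about this — namely that $\mu_2^{(k)}$ is independent of the $\sigma$-algebra generated by the history through event $k$ — is what makes the simple affine recursion valid and allows $\Ep{V_k\, e^{-2\gamma \mu_2^{(k)}}}$ to split. Everything else is routine: the comparison lemma is already embedded in Proposition~\ref{prop:consensus_mappings}, and the stability of a contractive affine recursion is standard.
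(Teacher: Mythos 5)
Your proof is correct and follows essentially the same route as the paper's: compose the continuous decay bound \eqref{eq:consensus_continuous} (with $\Delta t^{k,k+1}=\gamma/n$) with the replacement bound \eqref{eq:consensus_replacement}, use independence of the freshly sampled graph from the past to factor $\Ep{V^+(t^k)e^{-2\gamma\mu_2}}=\Ep{V^+(t^k)}\Ep{e^{-2\gamma\mu_2}}$, solve the resulting contractive affine recursion, and transfer the limit from event times to all $t$ via monotonicity of $V$ between events. Your explicit remark that $a<1$ (hence the recursion converges) and the careful independence bookkeeping are if anything slightly more precise than the paper's wording, but the argument is the same.
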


\vspace{2mm}

\begin{proof}
    From \eqref{eq:consensus_continuous}, we have that a time instant $t^{k+1}$ before the replacement, the disagreement satisfies:
    \begin{align}
    V(t^{k+1})&\le V^+(t^k)e^{-2\lambda_2(t^k)\gamma/n}\nonumber
    \\&=V^+(t^k)e^{-2\gamma\mu_2(t^k)},\label{eq:thm_replacements_1}
    \end{align}
    where $V^+(t^k)$ denotes the value of the disagreement at the time instant $k$ after the replacement and $\mu_2(t^k)$ is the second smallest normalized Laplacian eigenvalue of the network topology sampled at $t^k$. Since the graph sampled at $t^k$ is independent of all the graph topologies and replacements until $t^k$, we take the conditional expectation in \eqref{eq:thm_replacements_1} with respect to all the possible sampled graphs at $t^k$ and we obtain
    \begin{align}
        \Ep{V(t^{k+1})|V^+(t^k)}&\le V^+(t^k)\sum_{\ell}p_\ell e^{-2\gamma\mu_{2_\ell}(t^k)} \nonumber\\
         &=V^+(t^k)\Ep{e^{-2\gamma\mu_{2}(t^k)}}\label{eq:thm_replacements_temp}\\
        &=V^+(t^k)\Ep{e^{-2\gamma\mu_{2}}}, \label{eq:thm_replacements_2}
    \end{align}
    where $p_\ell$ denotes the probability associated with the graph $G_\ell$ sampled at the time instant $t^k$ (and also $\mu_{2_\ell}(t^k)$), and we remove the dependence on time of $\mu_2$ in \eqref{eq:thm_replacements_2} since the expected graph $\bar G$ and the sampling method are the same and independent for all the time instants corresponding to replacements such that $\Ep{e^{-2\gamma\mu_{2}(t^k)}}=\Ep{e^{-2\gamma\mu_{2}}}$ for all $k$. Then, we compute the total expectation in \eqref{eq:thm_replacements_2} and we get
    \begin{equation}\label{eq:thm_replacements_3}
        \Ep{V(t^{k+1})}\le \Ep{V^+(t^k)}\Ep{e^{-2\gamma\mu_{2}}}.
    \end{equation}
    Now, by using \eqref{eq:consensus_replacement}, we have that the disagreement at time $t^{k+1}$ after the replacement satisfies:
    \begin{equation}\label{eq:thm_replacements_4}
        \Ep{V^+(t^{k+1})}\le \alpha\Ep{V^+(t^k)}\Ep{e^{-2\gamma\mu_{2}}}+\beta,
    \end{equation}
where $\alpha=\frac{n^2-n-1}{n^2}$ and $\beta=\frac{n^2-1}{n^3}\sigma^2$. By the Comparison Lemma, we can guarantee that the disagreement is upper bounded by the solution of the right-hand side of \eqref{eq:thm_replacements_4}, which is an affine system of the form $x(k+1)=Ax(k)+u$ whose asymptotic behavior is given by  $\lim_{k\to\infty }x(k)=(I-A)^{-1}u$. Then, we have:
\begin{equation}\label{eq:thm_replacements_5}
\lim_{k\to\infty}\Ep{V^+(t^k)}\le \frac{\beta}{1-\alpha \Ep{e^{-2\gamma\mu_{2}}}}.    
\end{equation}
Finally, notice that \eqref{eq:thm_replacements_5} is valid for all $t$ since $V$ is always nonincreasing between two events $k$ and $k+1$ according to \eqref{eq:consensus_continuous},
and by replacing the values of $\alpha$ and $\beta$, we obtain the desired result.
\end{proof}
If the system is characterized only by replacements without modifying the graph topology of the network, the bound \eqref{eq:thm_replacement} just become:
$$
\limsup_{t\to\infty}\Ep{V(t)}\le\dfrac{\sigma^2(n^2-1)}{n\prt{n^2-(n^2-n-1)e^{-2\gamma\mu_2}}},  
$$
where $\mu_2$ is the second smallest normalized Laplacian eigenvalue of the fixed graph.

When there are only replacements (i.e., $\gamma\to0$) the bound \eqref{eq:thm_replacement} is obtained through the total expectation in \eqref{eq:consensus_replacement}, which yields:
\begin{equation}\label{eq:ODE_bound}
\Ep{V(t^{k+1})}\le \frac{n^2-n-1}{n^2}\Ep{V(t^k)}+
\frac{n^2-1}{n^3}\sigma^2,    
\end{equation}
since there is no continuous evolution of $V$. By using the Comparison Lemma and solving the right-hand side of \eqref{eq:ODE_bound}, the bound \eqref{eq:thm_replacement} will be given by $\frac{\sigma^2(n-1)}{n}$. 
In this scenario, when $n\to\infty$, the bound becomes $\sigma^2$ and when the number of agents is minimal (i.e., $n=1$), the bound becomes 0.

When there are no replacements (i.e., $\gamma\to\infty$), the bound \eqref{eq:thm_replacement} will be given by $\frac{\sigma^2(n^2-1)}{n^3}$. The value of the bound is not zero as it would be expected (just continuous evolution) since even if the interval tends to infinity, the bound remains based on the situation after a replacement where the consensus is perturbed by a new agent. 
That is why this bound coincides with the increase of $V$ during a replacement according to \eqref{eq:consensus_replacement} when  $V=0$. In this scenario, when $n\to\infty$, the bound becomes 0, because even if we have more agents, the impact of replacements in $V$ is almost negligible according to \eqref{eq:consensus_replacement}
and when the number of agents is minimal, the bound also becomes 0. The maximum value of the bound is obtained for $n=2$ and is given by $\frac{3}{8}\sigma^2$.

 Finally, let us analyze the behavior of the bound \eqref{eq:thm_replacement}, with respect to $\mu_2$. When the graphs are not so dense (i.e., $\mu_2\to 0$), the bound will be given by $\frac{\sigma^2(n-1)}{n}$ like in the case $\gamma\to0$, since the decrease of $V$ between two instants $t^k$ and $t^{k+1}$ will be negligible, and the bound will be mainly determined by the replacements. This is the maximum possible value of the bound. In the most dense graph (i.e., complete graph) we have $\mu_2=1$, and the bound will be given by $\frac{\sigma^2(n^2-1)}{n\prt{n^2-(n^2-n-1)e^{-2\gamma}}}$, which is the minimum possible value of the bound.

\section{Arrivals and departures}

\subsection{Process for arrivals and departures}

We consider that at each $k$-th event, there is a departure with probability $p_D(t^k)=\tau (n(t^k)-n_{m})$ or an arrival with probability $p_A(t^k)=1-p_D( t^k)=\tau(n_M-n(t^k))$, with $\tau$ satisfying $\tau (n_{M}-n_{m})=1$, where $n_{m}\in\N$ is the minimum number of agents, $n_M\in\N$ is the maximum number of agents, and $n_M>n_m$. Notice that at each $k$, there is either an arrival or a departure, and the two events cannot happen at the same time so that they are mutually exclusive. The process is well defined since when the number of agents is $n_{M}$, the probability of departures is 1, while when the number of agents is $n_{m}$, the probability of arrivals is 1. Therefore, the current number of agents satisfies $n(t)\in [n_{m},n_{M}]$.

At this preliminary stage, we consider this model with a bounded number of agents and probabilities that are linear on the current number of agents since it is technically simpler and allows us to represent variations in the number of agents while remaining bounded. Nevertheless, for future work, we could use more complex models including independent Poisson processes like in \cite{monnoyer2020open}. Notice that this process used for arrivals and departures resembles a birth-death process with bounded size and it is depicted in Fig.~\ref{fig:bd_process}. A similar process has been used in \cite{monnoyer2020open}, with two different Poisson processes with unbounded size. In our case, there is only one process and the size is limited \footnote{When the size of the system is bounded, a potential approach for the analysis is to use \emph{pseudo open multi-agent systems} based on a finite superset with all the possible agents in the system, where the change of dimension is modeled by activations/deactivations of agents \cite{varma2018open,vizuete2020influence,hsieh2021optimization}. However, in our case, even if the size of the system is bounded, during an arrival, we consider that the arriving agent is completely new, and when there is a departure, we assume that the leaving agent will not return to the system, so that the pool of all possible agents is unbounded.}.

\begin{figure}[!ht]
\centering
{ \includegraphics[width=\linewidth]{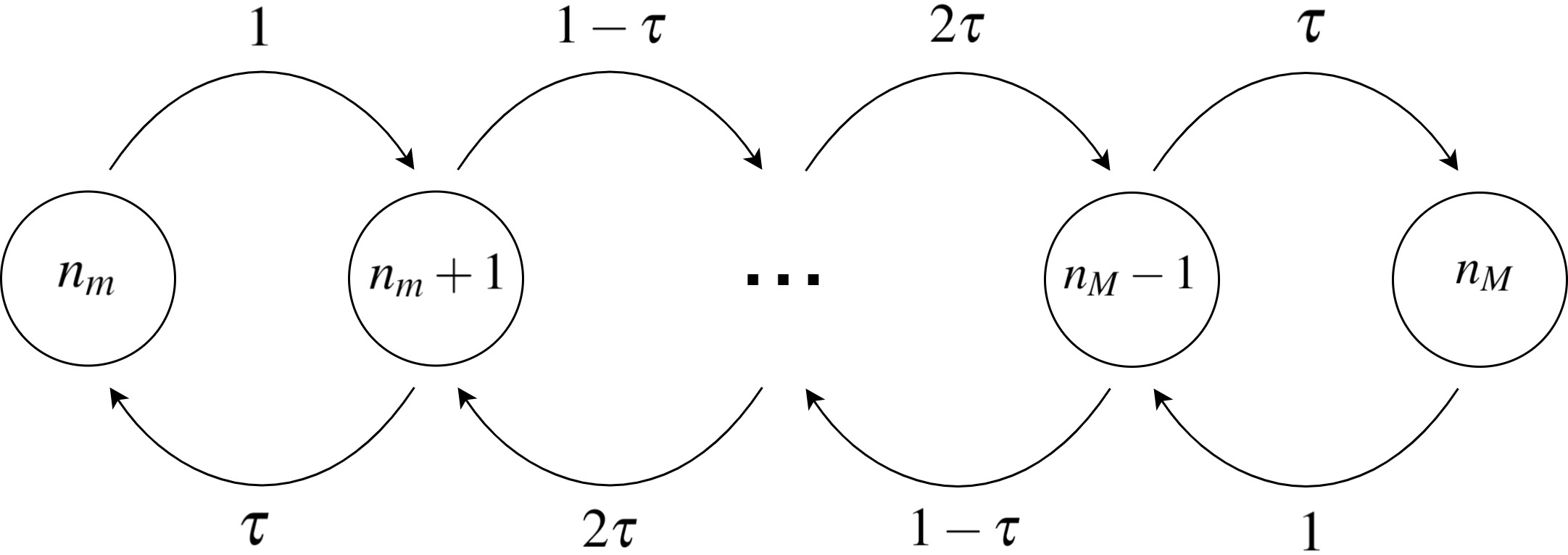}}
\caption{Stochastic process used to model the evolution of the number of agents $n(t^k)$. The process resembles a birth-death process with a minimum number of agents $n_m$ and a maximum number of agents $n_M$ such that $n(t^k)\in[n_m,n_M]$ for all $k\in\N$.}\label{fig:bd_process}
\end{figure}

\begin{proposition}\label{prop:expected_n}
    For the process of arrivals and departures:
    \begin{equation}\label{eq:lim_n_k}
\lim_{t\to\infty}\Ep{n(t)}=\frac{\nM+\nm}{2}.     
    \end{equation}
\end{proposition}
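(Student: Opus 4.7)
The plan is to reduce Proposition~\ref{prop:expected_n} to a scalar affine recursion for $\Ep{n(t^k)}$ and identify its fixed point. At each event, $n$ jumps by $+1$ with probability $p_A(t^k)$ and by $-1$ with probability $p_D(t^k)$, and these two options are mutually exclusive and exhaustive, so
\begin{equation*}
\Ep{n(t^{k+1})\mid n(t^k)} = n(t^k) + p_A(t^k) - p_D(t^k) = (1-2\tau)\,n(t^k) + \tau(\nM+\nm),
\end{equation*}
after substituting $p_A(t^k) = \tau(\nM - n(t^k))$ and $p_D(t^k) = \tau(n(t^k) - \nm)$.

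Next, I would apply the tower property to turn this into the deterministic recursion
\begin{equation*}
\Ep{n(t^{k+1})} = (1-2\tau)\,\Ep{n(t^k)} + \tau(\nM+\nm).
\end{equation*}
Because $\tau = 1/(\nM-\nm)$ by construction, the contraction factor $|1-2\tau|$ is strictly less than $1$ whenever $\nM-\nm \ge 2$, so the recursion converges geometrically to its unique fixed point $\tau(\nM+\nm)/(2\tau) = (\nM+\nm)/2$. Since $n(t)$ is piecewise constant between consecutive events, $\Ep{n(t)}$ on $[t^k,t^{k+1})$ equals $\Ep{n(t^k)}$, and therefore the limit in $t$ coincides with the limit in $k$, which yields the claimed value.

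I do not anticipate any real obstacle: the argument is essentially the standard fixed-point analysis of a one-dimensional contractive affine map, combined with the observation that the time variable $t$ can be replaced by the event index $k$ without loss. The only borderline case is $\nM = \nm+1$, where $\tau = 1$ and the chain alternates deterministically between $\nm$ and $\nM$; the stated limit then holds only in Cesàro mean, and this degenerate situation can be excluded by a mild assumption or addressed with a short remark.
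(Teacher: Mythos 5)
Your proposal is correct and follows essentially the same route as the paper: compute the conditional expectation of $n(t^{k+1})$ given $n(t^k)$, take total expectation to get the affine recursion $\Ep{n(t^{k+1})}=(1-2\tau)\Ep{n(t^k)}+\tau(\nM+\nm)$, and read off the fixed point $(\nM+\nm)/2$. Your additional remarks — that the $t$-limit reduces to the $k$-limit because $n(t)$ is constant between events, and that the case $\nM=\nm+1$ (where $1-2\tau=-1$) only converges in Cesàro mean — are valid refinements that the paper's proof passes over silently.
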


\vspace{2mm}

\begin{proof}
    The conditional expectation of the number of agents at an event $t^{k+1}$  satisfies:
    \begin{align*}
    \Ep{n(t^{k+1})|n(k)}&=\tau(n_M-n(t^k))(n(t^k)+1)+\\&\quad\;\tau( n(t^k)- n_{m})(n(t^k)-1)\\
    &=(1-2\tau)n( t^k)+\tau(n_M+n_m).
    \end{align*}
By taking the total expectation, we obtain:
\begin{equation}\label{eq:dynamics_n_prop}
\Ep{n(t^{k+1})}=(1-2\tau)\Ep{n(t^k)}+\tau(n_M+n_m),    
\end{equation}
which is an affine system of the form $x(k+1)=Ax(k)+u$, whose asymptotic behavior is given by $\lim_{k\to\infty }x(k)=(I-A)^{-1}u$, and the proof is completed.
\end{proof}

\subsection{Disagreement subject to arrivals and departures}

Following the same approach as in the case of replacements in Section~\ref{sec:replacements}, we consider that the time interval between two events at $k$ and $k+1$ is given by $\Delta t^{k,k+1}=\gamma/n(t^k)$, with $\gamma\ge 0$ for all $k\in\N$. Furthermore, we consider that during an arrival or a departure, the topology of the network is sampled again according to Definition~\ref{def:sampling}, independently of all the graph topologies sampled before the time instant of the arrival/departure.

\begin{theorem}\label{thm:arrivals}
    For the linear consensus \eqref{eq:linear_consensus} in a system under arrivals and departures with $n_M>3$:
    \begin{multline}\label{eq:thm_arrivals}
      \limsup_{t\to\infty}\Ep{V(t)}\le \\
      \frac{\sigma^2(\nM-1)^2}{2(\nm+1)\prt{(\nM-1)^2-\nM(\nM-2)\Ep{e^{-2\gamma\mu_2}}_{M}}},  
    \end{multline}
    where $\Ep{e^{-2\gamma\mu_{2}}}_M:=\max_{n\in[n_m,n_M]} \Ep{e^{-2\gamma\mu_{2}^{(n)}}}$ and each $\Ep{e^{-2\gamma\mu_{2}^{(n)}}}$ corresponds to the expectation among all the possible $\mu_2$ given a number of agents $n$.
\end{theorem}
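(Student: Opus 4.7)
The plan is to mirror the argument of Theorem~\ref{thm:replacement}: bound the one-step evolution of $\Ep{V^+(t^k)}$ by combining the continuous bound \eqref{eq:consensus_continuous} between events with the event updates \eqref{eq:consensus_departure}--\eqref{eq:consensus_arrival}, and then solve the resulting affine recursion. The novelty is that the current size $n(t^k)$ is now random and the coefficients of the event update depend on it, so one has to deal with a recursion whose forcing term is itself a random quantity that only stabilizes asymptotically.

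Conditioning on the current size $n$ and mixing the probabilities $p_A(n)$ and $p_D(n)$ with the bounds in Proposition~\ref{prop:consensus_mappings}, the event update can be written as $\Ep{V^+(t^k)|V(t^k),n}\le a(n)V(t^k)+b(n)$, where $a(n)$ is the convex combination of $n/(n+1)$ and $n(n-2)/(n-1)^2$ with weights $p_A(n), p_D(n)$, and $b(n)=p_A(n)\sigma^2/(n+1)$. A short monotonicity check, in which the hypothesis $n_M>3$ is precisely what is needed (it is exactly what makes $n/(n+1)\le n_M(n_M-2)/(n_M-1)^2$), shows that both building blocks of $a(n)$ are bounded by $n_M(n_M-2)/(n_M-1)^2$ for $n\in[n_m,n_M]$, and hence $a(n)\le a_{\max}:=n_M(n_M-2)/(n_M-1)^2$. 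For the additive term, crucially I would not replace $b(n)$ by a deterministic constant but keep the probability $p_A(n)$ explicit, using the coarser bound $b(n)\le p_A(n)\sigma^2/(n_m+1)$; this preserves the expectation of $p_A(n)$, which is what eventually produces the factor $1/2$ in the denominator of \eqref{eq:thm_arrivals}.

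The continuous-evolution step then proceeds exactly as in \eqref{eq:thm_replacements_temp}--\eqref{eq:thm_replacements_2}, using the independence of the resampled graph and the definition of $\Ep{e^{-2\gamma\mu_2}}_M$ as a uniform upper bound on $\Ep{e^{-2\gamma\mu_2^{(n)}}}$ over $n\in[n_m,n_M]$. Combining with the event update and taking total expectations yields the affine-in-expectation recursion
$$
\Ep{V^+(t^{k+1})}\le a_{\max}\Ep{e^{-2\gamma\mu_2}}_M\Ep{V^+(t^k)}+\frac{\sigma^2}{n_m+1}\Ep{p_A(n(t^{k+1}))}.
$$
Proposition~\ref{prop:expected_n} and the linearity of $p_A$ give $\Ep{p_A(n(t^k))}=\tau(n_M-\Ep{n(t^k)})\to 1/2$, so the recursion has asymptotically constant forcing $\sigma^2/(2(n_m+1))$, and a limsup version of the Comparison Lemma yields exactly the right-hand side of \eqref{eq:thm_arrivals}. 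The passage from $V^+(t^k)$ to $V(t)$ for arbitrary $t$ uses the nonincreasing behavior of $V$ between events as at the end of the proof of Theorem~\ref{thm:replacement}. The main obstacle I expect is the bookkeeping around the non-constant forcing term: the plain affine-system identity $\lim x(k)=(I-A)^{-1}u$ used in \eqref{eq:thm_replacements_5} no longer applies verbatim, so one must justify the limsup step via an $\epsilon$-argument on $\Ep{p_A(n(t^k))}$ together with the contraction $a_{\max}\Ep{e^{-2\gamma\mu_2}}_M<1$.
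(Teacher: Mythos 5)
Your proposal is correct and follows essentially the same route as the paper: the same one-step recursion combining \eqref{eq:thm_arrivals_1}--\eqref{eq:thm_arrivals_3} with the event update mixed through $p_A(n),p_D(n)$, keeping $p_A$ explicit in the forcing term so that $\Ep{p_A(n(t^k))}\to 1/2$ via Proposition~\ref{prop:expected_n}, and concluding with an $\epsilon$-argument on the asymptotically constant forcing. The only (cosmetic) difference is that you bound the convex-combination coefficient $a(n)$ directly by $n_M(n_M-2)/(n_M-1)^2$ using monotonicity, whereas the paper first substitutes $n_M$ into the two building blocks and then uses $\alpha>\beta$ together with $n(t^k)\le n_M$ to reach the same contraction factor.
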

\begin{proof}
    From \eqref{eq:consensus_continuous}, we have that a time instant $t^{k+1}$ before the arrival/departure, the disagreement satisfies:
    \begin{equation}\label{eq:thm_arrivals_1}
    V(t^{k+1})\le V^+(t^k)e^{-2\gamma\mu_2(t^k)},
    \end{equation}
    and by following a reasoning similar to the proof of Theorem~\ref{thm:replacement} until \eqref{eq:thm_replacements_temp}, we compute the conditional expectation obtaining:  \begin{equation}\label{eq:thm_arrivals_2}      \Ep{V(t^{k+1})|V^+(t^k)}\le 
        V^+(t^k)\Ep{e^{-2\gamma\mu_{2}(t^k)}}. 
    \end{equation}
   Since $n(t^k)$ satisfies $n(t^k)\in[n_m,n_M]$, and each potential number of agents $n_\ell$ has an associated $\Ep{e^{-2\gamma\mu_{2}^{(n_\ell)}}}$, we consider the lowest decay rate $\Ep{e^{-2\gamma\mu_{2}}}_M:=\max_{n\in[n_m,n_M]} \Ep{e^{-2\gamma\mu_{2}^{(n)}}}$ so that \eqref{eq:thm_arrivals_2} can be upper bounded by:
   \begin{equation*}
       \Ep{V(t^{k+1})|V^+(t^k)}\le 
        V^+(t^k)\Ep{e^{-2\gamma\mu_{2}}}_M,
   \end{equation*}
   whose total expectation yields:
   \begin{equation}\label{eq:thm_arrivals_3}
       \Ep{V(t^{k+1})}\le 
        \Ep{V^+(t^k)}\Ep{e^{-2\gamma\mu_{2}}}_M.
   \end{equation}
By using \eqref{eq:consensus_departure} and \eqref{eq:consensus_arrival} with the lower and upper bounds $n_m$ and $n_M$ respectively, we compute the conditional expectation of the disagreement at the time instant $t^{k+1}$ after the arrival/departure, given $V(t^{k+1})$ and $n(t^k)$:

\vspace{-4mm}

\small
\begin{align}
     &\Ep{V^+(t^{k+1})|V(t^{k+1}),n(t^k)}\nonumber\\ &\le p_D\prt{1-\frac{1}{(n_{M}-1)^2}}V(t^{k+1})+p_A\prt{\frac{\nM}{\nM+1}V(t^{k+1})+\frac{\sigma^2}{\nm+1}}\nonumber\\
     &=p_D\alpha V(t^{k+1})+p_A(\beta V(t^{k+1})+\xi)\nonumber\\
     &=\tau(n(t^k)-\nm)\alpha V(t^{k+1})+\tau(n_M-n(t^k))(\beta V(t^{k+1})+\xi)\nonumber\\
     &=(\alpha-\beta)\tau V(t^{k+1})n(t^k)+\tau(\beta n_M-\alpha n_m)V(t^{k+1})-\tau\xi n(t^k)+\nonumber\\
     &\quad\; \xi\tau n_M,\label{eq:thm_arrivals_4}
\end{align}

\normalsize

\noindent where $\alpha=1-\frac{1}{(n_{M}-1)^2}$, $\beta=\frac{\nM}{\nM+1}$ and $\xi=\frac{\sigma^2}{\nm+1}$. Notice that $\alpha>\beta$ if $\frac{n_M(n_M-3)}{(n_M+1)(n_M-1)^2}>0$, which holds for $n_M>3$. Then, by using the bound $(\alpha-\beta)\tau V(t^{k+1})n(t^k)\le (\alpha-\beta)\tau V(t^{k+1})n_M$ and computing the total expectation in \eqref{eq:thm_arrivals_4} we obtain:

\small

\vspace{-4mm}

\begin{align}
&\Ep{V^+(t^{k+1})}\nonumber
\\&\le \tau\nM(\alpha-\beta) \Ep{V(t^{k+1})}+\tau(\beta n_M-\alpha n_m)\Ep{V(t^{k+1})}-\nonumber\\ &\quad\;\tau\xi \Ep{n(t^k)}+\xi\tau n_M\nonumber\\
&=\alpha \Ep{V(t^{k+1}))}-\phi \Ep{n(t^k)}+\zeta,\label{eq:dynamics_V00}
\end{align}

\normalsize

\noindent where $\phi=\tau\xi$ and $\zeta=\xi\tau n_M$. By using \eqref{eq:thm_arrivals_3} in \eqref{eq:dynamics_V00} we get:
\begin{equation}\label{eq:dynamics_V0}
    \Ep{V^+(t^{k+1})}\le \alpha\Ep{V^+(t^k)}\Ep{e^{-2\gamma\mu_{2}}}_M-\phi \Ep{n(t^k)}+\zeta.
\end{equation}
From \eqref{eq:lim_n_k}, we have that for all $\epsilon>0$, there exists a $k^*\in\N$ such that $\abs{n(t^k)-\frac{n_M+n_m}{2}}<\epsilon$ for all $k>k^*$. This implies that \eqref{eq:dynamics_V0} satisfies:

\small

\vspace{-4mm}

\begin{equation*}
    \Ep{V^+(t^{k+1})}\le \alpha\Ep{V^+(t^k)}\Ep{e^{-2\gamma\mu_{2}}}_M-\phi \prt{\frac{n_M+n_m}{2}-\epsilon}+\zeta,
\end{equation*}

\normalsize

\noindent and by taking the limit $k\to\infty$ we get:
\begin{align}
    \lim_{k\to\infty} \Ep{V^+(t^{k+1})}&\le \frac{\zeta-\phi \prt{\frac{n_M+n_m}{2}-\epsilon}}{1-\alpha\Ep{V^+(t^k)}\Ep{e^{-2\gamma\mu_{2}}}_M}\nonumber\\
    &=\frac{\xi\tau n_M-\tau\xi\frac{n_M+n_m}{2}}{1-\alpha\Ep{V^+(t^k)}\Ep{e^{-2\gamma\mu_{2}}}_M}+\hat \epsilon\nonumber\\
    &=\frac{\xi}{2}\frac{1}{1-\alpha\Ep{V^+(t^k)}\Ep{e^{-2\gamma\mu_{2}}}_M}+\hat\epsilon\label{eq:V_epsilon},
\end{align}
which is valid for all $\hat\epsilon>0$. 
   According to \eqref{eq:consensus_continuous}, the disagreement is always nonincreasing between two time instants $t^k$ and $t^{k+1}$, so that \eqref{eq:V_epsilon} is valid for all time $t$ and, by replacing the value of $\xi$ and $\alpha$ we obtain the desired result.
\end{proof}

When there are only arrivals/departures (i.e., $\gamma\to0$), the bound \eqref{eq:thm_arrivals} is given by $\frac{\sigma^2(\nM-1)^2}{2(\nm+1)}$. If $\nm=\nM=n$ the bound becomes $\frac{\sigma^2(n-1)^2}{n+1}$. If $n=1$ the bound becomes 0 and when $n\to\infty$ the bound diverges. 

When there are no arrivals/departures (i.e., $\gamma \to\infty$) , the bound \eqref{eq:thm_arrivals} is given by $\frac{\sigma^2}{2(\nm+1)}$. Similarly to the replacement case, the value of the bound is not zero as it would be expected since the bound remains based on the situation after an arrival/departure where the consensus is perturbed by a new agent even if the interval tends to infinity. This value coincides with the expected increase of $V$ during an arrival/departure when $V=0$ and $n=\frac{n_M+n_m}{2}$, and corresponds to the right-hand side of \eqref{eq:dynamics_V0} with $\Ep{V^+(t^k)}=0$ and $\Ep{n(t^k)}=\frac{n_M+n_m}{2}$.
When $\nm=1$, the bound becomes $\frac{\sigma^2}{4}$ and when $n\to\infty$, the bound becomes 0.

When $n_M\to\infty$, the bound \eqref{eq:thm_arrivals} is given by $\frac{\sigma^2}{2(n_m+1)(1-\Ep{e^{-2\gamma\mu_{2}}}_M)}$. Since for graphs sampled from graphons, $\mu_2$ converges to a value different from zero \cite{vizuete2021laplacian}, for large values of $n_M$, the bound \eqref{eq:thm_arrivals} is basically determined by $n_m$.

Finally, regarding the dependence of \eqref{eq:thm_arrivals} on $\mu_2$, the maximum value of the bound corresponds to $\mu_2\to 0$ (i.e., not so dense graphs) and is given by $\frac{\sigma^2(n_M-1)^2}{2(n_m+1)}$, which coincides with the case $\gamma\to 0$. The minimum value of the bound corresponds to $\mu_2=1$ (i.e., complete graph) and is given by $\frac{\sigma^2(\nM-1)^2}{2(\nm+1)\prt{(\nM-1)^2-\nM(\nM-2)e^{-2\gamma}}}$.

\section{Impact of the graphon structure}

\subsection{Estimation of $\Ep{e^{-2\gamma\mu_{2}}}$}

The upper bound for replacements derived in Theorem~\ref{thm:replacement} is a function of $\Ep{e^{-2\gamma\mu_{2}}}$ given a fixed number of agents $n$, which by definition should imply the computation of the spectrum of $2^{n(n-1)/2}$  matrices \footnote{Number of ways to interconnect $n$ nodes to form a graph \cite{harary2014graphical}} of size $n$. In the case of the upper bound for arrivals/departures in Theorem~\ref{thm:arrivals}, the computation of $\Ep{e^{-2\gamma\mu_{2}}}_M$ by definition requires the computation of the spectrum of a number of matrices in the order of $(n_M-n_m)2^{n_M(n_M-1)/2}$. In both scenarios, this could be hard to compute, especially when the number of agents is large (which is common in open multi-agent systems).
For this reason, we are interested in an upper bound as a function of $\bar\mu_2$, which we recall is the second smallest normalized Laplacian eigenvalue of $\bar G$. This bound will be obtained for a class of graphons that is wide enough to be relevant for many applications and that has been introduced in \cite{avella2018centrality}.

\begin{definition}[Piecewise Lipschitz graphon]
Graphon $W$ is said to be \textit{piecewise Lipschitz} if there exists a constant $L$ and a sequence of non-overlapping intervals $I_k=[\alpha_{k-1},\alpha_k)$ defined by $0=\alpha_0 < \cdots <\alpha_{K+1}=1$, for an  integer $K\ge 0$ such that for any $k, \ell$, any set $I_{k\ell}=I_k \times I_\ell$ and pairs $(x_1,y_1) \textrm{ and } (x_2,y_2)\in I_{k\ell}$ we have that:
$$
\vert W(x_1,y_1)-W(x_2,y_2)\vert \leq L(\vert x_1-x_2 \vert+\vert y_1-y_2 \vert) . 
$$
\end{definition}

\vspace{2mm}

\begin{definition}[Large enough $n$]\label{largeN}
Given a piecewise Lipschitz graphon $W$ and $\epsilon<e^{-1}$, we will say that $n$ is \textit{large enough} if $n$ satisfies the following conditions:
\begin{subequations}
\begin{equation}
\dfrac{2}{n}<\min_{k\in \{1,\ldots , K+1\}}(\alpha_k-\alpha_{k-1}) ; \label{eq_condi1}
\end{equation}
\begin{equation}
\dfrac{1}{n}\log\left(\dfrac{2n}{\epsilon}\right)+\dfrac{1}{n}(2K+3L)<\max_{x}d(x) ; \label{eq_condi2}
\end{equation}
\begin{equation}
ne^{-n/5}<\epsilon; \label{eq_condi3}
\end{equation}
\begin{equation}
9\log(2en)<n. \label{eq_condi4}
\end{equation}
\end{subequations}
\end{definition}

\vspace{2mm}

\begin{theorem}\label{thm:exponential_lambda}
    Given a piecewise Lipschitz graphon $W$, for $n$ large enough, we have:
    \begin{equation}
        \Ep{e^{-2\gamma\mu_2}}\le e^{-2\gamma \bar \mu_2}\prt{e^{6\gamma\sqrt{\frac{\log(2en)}{n}}}+\Psi(n)},\label{eq:thm_arrivals_eq1}    
    \end{equation}

\vspace{-5mm}

\small

\begin{multline*}
\!\!\!\!\!\!\!\!\!\!  \text{where }  \Psi(n):=\\12\gamma\sqrt{\pi n}e^{9\gamma^2/n}\prt{\sqrt{1-e^{-\frac{4}{9\pi n}(n-9\gamma)^2}}-\sqrt{1-e^{-\frac{1}{n}(\sqrt{n\log(2en)}-3\gamma)^2}}},  
\end{multline*}

\normalsize

\vspace{-1mm}

\noindent and $\Psi(n)=O(\frac{1}{\sqrt{n}})$.

\end{theorem}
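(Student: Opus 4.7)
The key identity is to pull out the deterministic factor $e^{-2\gamma\bar\mu_2}$, writing
$$\Ep{e^{-2\gamma\mu_2}}=e^{-2\gamma\bar\mu_2}\Ep{e^{2\gamma Y}},\qquad Y:=\bar\mu_2-\mu_2,$$
so the task reduces to controlling the moment generating function of the centered deviation $Y$. I would split $\Ep{e^{2\gamma Y}}$ at the threshold $\delta_n:=3\sqrt{\log(2en)/n}$. On $\{Y\le\delta_n\}$ the pointwise bound $e^{2\gamma Y}\le e^{2\gamma\delta_n}=e^{6\gamma\sqrt{\log(2en)/n}}$ produces exactly the first summand inside the parentheses of \eqref{eq:thm_arrivals_eq1}; the role of $\Psi(n)$ is then to absorb the tail $\{Y>\delta_n\}$.

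For that tail I invoke a concentration inequality for the second smallest normalized Laplacian eigenvalue of a graph sampled from a piecewise Lipschitz graphon, of the type established under exactly the four conditions \eqref{eq_condi1}--\eqref{eq_condi4} in \cite{avella2018centrality,vizuete2021laplacian}. The expected shape is sub-Gaussian, namely $\Pr(|\mu_2-\bar\mu_2|\ge t)\le Cn\exp(-nt^2/9)$, and $\delta_n$ is calibrated precisely so that $n\delta_n^2/9=\log(2en)$, i.e., so the tail probability at the threshold is of order $1/n$. The layer-cake identity together with $Y\le\bar\mu_2\le 1$ gives
$$\Ep{e^{2\gamma Y}\mathbf{1}_{Y>\delta_n}}\le e^{2\gamma\delta_n}\Pr(Y>\delta_n)+2\gamma\int_{\delta_n}^{1}e^{2\gamma t}\Pr(Y>t)\dd t.$$
Substituting the concentration bound and completing the square via $2\gamma t-\tfrac{n}{9}t^2=-\tfrac{n}{9}\bigl(t-\tfrac{9\gamma}{n}\bigr)^2+\tfrac{9\gamma^2}{n}$ isolates the prefactor $e^{9\gamma^2/n}$ appearing in $\Psi(n)$.

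After the change of variable $u=\tfrac{\sqrt{n}}{3}(t-\tfrac{9\gamma}{n})$ the integral becomes Gaussian over $[u_1,u_2]$ with $u_1=(\sqrt{n\log(2en)}-3\gamma)/\sqrt{n}$ and $u_2=(n-9\gamma)/(3\sqrt{n})$, which are precisely the arguments that appear inside $\Psi(n)$; both endpoints are positive thanks to conditions \eqref{eq_condi3} and \eqref{eq_condi4}. A standard Chu-type inequality on the error function, $\sqrt{1-e^{-x^2}}\le\erf(x)\le\sqrt{1-e^{-4x^2/\pi}}$ for $x\ge 0$, then converts $\erf(u_2)-\erf(u_1)$ into the bracketed difference in $\Psi(n)$, while the $\sqrt{\pi n}$ prefactor collects the measure-change constants and the $2\gamma$ from integration by parts. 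The decay $\Psi(n)=O(n^{-1/2})$ then follows by expanding $\sqrt{1-e^{-x}}\approx 1-\tfrac{1}{2}e^{-x}$ for both $x=(4/\pi)u_2^2=\Theta(n)$ and $x=u_1^2=\log(2en)+O(1)$: the difference is of order $e^{-u_1^2}\sim 1/n$, so after multiplication by $12\gamma\sqrt{\pi n}\,e^{9\gamma^2/n}=\Theta(\sqrt{n})$ the bound is $O(1/\sqrt{n})$.

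The main obstacle is twofold: first, lining up the precise sub-Gaussian concentration inequality from prior work with every one of the ``large enough $n$'' conditions \eqref{eq_condi1}--\eqref{eq_condi4}, since the statement quoted must have the right variance scale to produce the exponent $nt^2/9$; and second, carrying the numerical constants faithfully through the square completion and the Chu bound so that the resulting expression matches $\Psi(n)$ literally rather than only up to multiplicative constants. Beyond that bookkeeping, the argument is a fairly standard MGF-splitting and Gaussian-tail computation.
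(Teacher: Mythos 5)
Your proposal follows essentially the same route as the paper's proof: the same concentration result (Lemma~\ref{lemma:second_eig}, rewritten as the tail bound $\Pr\brk{\abs{\mu_2-\bar\mu_2}>t}\le 4ne^{-nt^2/9}$), the same threshold $\delta_n=3\sqrt{\log(2en)/n}$, the same completion of the square producing $e^{9\gamma^2/n}$, the same Gaussian substitution with endpoints $u_1=(\sqrt{n\log(2en)}-3\gamma)/\sqrt{n}$ and $u_2=(n-9\gamma)/(3\sqrt n)$, the same Chu bounds, and the same expansion for $\Psi(n)=O(1/\sqrt n)$. However, there is a concrete flaw in your splitting of the expectation, and as written it does not deliver the stated inequality. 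You bound $\Ep{e^{2\gamma Y}\mathds{1}_{Y\le\delta_n}}\le e^{2\gamma\delta_n}$ and then, for the tail, write $\Ep{e^{2\gamma Y}\mathds{1}_{Y>\delta_n}}= e^{2\gamma\delta_n}\Pr(Y>\delta_n)+2\gamma\int_{\delta_n}^{1}e^{2\gamma t}\Pr(Y>t)\dd t$. Summing the two pieces double counts the level $e^{2\gamma\delta_n}$: an extra boundary term $e^{2\gamma\delta_n}\Pr(Y>\delta_n)$ appears, and the only control available on it is $e^{2\gamma\delta_n}\cdot 4ne^{-n\delta_n^2/9}=\tfrac{2}{e}e^{2\gamma\delta_n}$ (equivalently, Lemma~\ref{lemma:second_eig} at $\epsilon=e^{-1}$ gives $\Pr(Y>\delta_n)\le 2/e$). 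This is a constant-order quantity, comparable to the main term, and cannot be absorbed by $\Psi(n)=O(1/\sqrt n)$; your decomposition therefore only yields $\Ep{e^{-2\gamma\mu_2}}\le e^{-2\gamma\bar\mu_2}\prt{(1+\tfrac{2}{e})e^{6\gamma\sqrt{\log(2en)/n}}+\Psi(n)}$, which is weaker than \eqref{eq:thm_arrivals_eq1}. The fix is immediate and is exactly what the paper does: keep the probability factor on the first piece, $\Ep{e^{2\gamma Y}\mathds{1}_{Y\le\delta_n}}\le e^{2\gamma\delta_n}\Pr(Y\le\delta_n)$, so the two boundary contributions recombine into exactly $e^{2\gamma\delta_n}$; equivalently, apply the layer-cake formula to $Z=e^{2\gamma\abs{\mu_2-\bar\mu_2}}$, split $\int_0^{e^{2\gamma}}\Pr\brk{Z>z}\dd z$ at $z_0=e^{2\gamma\delta_n}$, bound the integrand by $1$ on $[0,z_0]$ and by the tail bound on $[z_0,e^{2\gamma}]$, and no separate $z_0\Pr(Z>z_0)$ term arises.

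A smaller point you flag but do not resolve: the sub-Gaussian tail bound inherited from Lemma~\ref{lemma:second_eig} is valid only for $t$ corresponding to $\epsilon\in(ne^{-n/5},e^{-1})$, i.e. for $t$ between $\delta_n$ and $3\sqrt{(n+5\log 2)/(5n)}$, and it also requires the degree condition $\bar d_{(n)}>\tfrac{4}{9}\log(2n/\epsilon)$, which for piecewise Lipschitz graphons holds for $n$ large enough by \eqref{eq_condi2}--\eqref{eq_condi3}. One must check that the upper end of this validity interval exceeds $1$ (so the bound covers all of $[\delta_n,1]$, which holds since $e^{6\gamma\sqrt{(n+5\log2)/(5n)}}>e^{2\gamma}$ for all $n$) and that $\delta_n<1$, which is precisely \eqref{eq_condi4}. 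The paper verifies these explicitly; with those checks and the corrected split, your square completion, change of variables, prefactor $12\gamma\sqrt{\pi n}e^{9\gamma^2/n}$, Chu bounds, and $O(1/\sqrt n)$ asymptotics match the paper's computation line by line.
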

Before presenting the proof of Theorem~\ref{thm:exponential_lambda}, we recall a result that will be used in the proof.
\begin{lemma}[\cite{vizuete2021laplacian,garin2024corrections}]\label{lemma:second_eig}
    Given a graphon $W$, if $\bar d_{(n)}>\frac{4}{9}\log(2n/\epsilon)$, with probability at least $1-2\epsilon$ the normalized eigenvalues $\mu_i$ and $\bar \mu_i$ of the Laplacian matrices $L$ and $\bar L$  respectively, satisfy:
    $$
    \max_{i=1,\ldots,n}\abs{\mu_i-\bar \mu_i}\le 3\sqrt{\frac{\log(2n/\epsilon)}{n}},
    $$
    where $\bar d_{(n)}$ is the maximum expected degree.
\end{lemma}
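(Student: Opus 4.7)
The plan is to reduce the bound on $\Ep{e^{-2\gamma\mu_2}}$ to a tail integration for the eigenvalue perturbation, which Lemma~\ref{lemma:second_eig} controls with subgaussian rate. Setting $Y := (\bar\mu_2 - \mu_2)_+$, the inequality $\mu_2 \ge \bar\mu_2 - Y$ gives $e^{-2\gamma\mu_2} \le e^{-2\gamma\bar\mu_2}\, e^{2\gamma Y}$, so it suffices to bound $\Ep{e^{2\gamma Y}}$. Since $Y \le \max_i|\mu_i - \bar\mu_i|$, Lemma~\ref{lemma:second_eig} applied with $t = 3\sqrt{\log(2n/\epsilon)/n}$, equivalently $\epsilon = 2n\,e^{-nt^2/9}$, yields the subgaussian tail
\begin{equation*}
P(Y > t) \le 4n\, e^{-nt^2/9},
\end{equation*}
valid for every $t \ge t_0 := 3\sqrt{\log(2en)/n}$, that being the smallest $t$ compatible with an admissible $\epsilon \le e^{-1}$ in Definition~\ref{largeN}.

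The main computation is the layer-cake identity
\begin{equation*}
\Ep{e^{2\gamma Y}} = 1 + \int_0^\infty 2\gamma e^{2\gamma t}\, P(Y > t)\, dt,
\end{equation*}
split at $t_0$. On $[0, t_0]$ the trivial bound $P(Y > t) \le 1$ integrates to $e^{2\gamma t_0} - 1$, which together with the leading $1$ produces the first summand $e^{6\gamma\sqrt{\log(2en)/n}}$ of the theorem. On $[t_0, 1]$ the subgaussian tail applies; the cutoff $t = 1$ is supplied by the deterministic a priori bound $Y \le 1$ that comes from normalized Laplacian eigenvalues of sampled and expected graphs lying in $[0,1]$. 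Completing the square in the exponent $2\gamma t - nt^2/9$ and substituting $u = \sqrt{n/9}(t - 9\gamma/n)$ reduces this piece to
\begin{equation*}
12\gamma\sqrt{\pi n}\, e^{9\gamma^2/n}\bigl(\erf(u_1) - \erf(u_0)\bigr),
\end{equation*}
with $u_0 = (\sqrt{n\log(2en)} - 3\gamma)/\sqrt{n}$ and $u_1 = (n - 9\gamma)/(3\sqrt{n})$ the images of $t_0$ and $1$. Invoking the P\'olya-type sandwich $\sqrt{1 - e^{-x^2}} \le \erf(x) \le \sqrt{1 - e^{-4x^2/\pi}}$ at $u_0$ and $u_1$ respectively produces precisely the difference of square roots defining $\Psi(n)$. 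The asymptotic $\Psi(n) = O(n^{-1/2})$ then follows from the expansion $\sqrt{1 - e^{-y^2}} = 1 - \tfrac{1}{2} e^{-y^2} + O(e^{-2y^2})$: the term at $u_1$ is exponentially small in $n$, while at $u_0$ one has $e^{-u_0^2} \sim 1/(2en)$, so the parenthesis is of order $1/n$ and the $\sqrt{n}$ prefactor gives the stated rate.

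The main technical hurdle is tracking the admissible range of $\epsilon$ (equivalently of $t$): Definition~\ref{largeN} secures Lemma~\ref{lemma:second_eig} at $\epsilon = e^{-1}$, whereas the integration on $[t_0, 1]$ invokes values of $\epsilon$ as small as $2n\,e^{-n/9}$, so the hypothesis $\bar d_{(n)} > (4/9)\log(2n/\epsilon)$ must be verified uniformly across that range, which is where the quantitative thresholds of Definition~\ref{largeN} are consumed. A secondary care point is to apply the two P\'olya inequalities in the correct direction at each endpoint so that the final difference of square roots matches the precise algebraic form of $\Psi(n)$ stated in the theorem.
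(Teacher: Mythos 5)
There is a fundamental mismatch: your proposal does not prove the statement at hand. The statement is Lemma~\ref{lemma:second_eig}, a concentration inequality asserting that, under the degree condition $\bar d_{(n)}>\frac{4}{9}\log(2n/\epsilon)$, every normalized Laplacian eigenvalue of the sampled graph lies within $3\sqrt{\log(2n/\epsilon)/n}$ of the corresponding eigenvalue of the expected graph, with probability at least $1-2\epsilon$. What you have written is instead a proof sketch of Theorem~\ref{thm:exponential_lambda}: you take Lemma~\ref{lemma:second_eig} as a black box (``which Lemma~\ref{lemma:second_eig} controls with subgaussian rate''), convert it into a tail bound for $Y=(\bar\mu_2-\mu_2)_+$, and integrate. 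That sketch does, for what it is worth, essentially reproduce the paper's proof of Theorem~\ref{thm:exponential_lambda} (tail integration split at $t_0$, Gaussian integral, the error-function sandwich, and the $O(n^{-1/2})$ asymptotics), but at no point does it establish the lemma itself.

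A proof of the lemma would have to proceed along entirely different lines. The standard route is: write $L-\bar L=(D-\bar D)-(A-\bar A)$ as a sum of independent, centered, bounded random matrices indexed by the edges (each Bernoulli edge variable $a_{ij}-\bar A(i,j)$ contributes a symmetric rank-two perturbation to $A$ and the matching diagonal perturbation to $D$); apply a matrix concentration inequality of Bernstein/Chernoff type to bound $\norm{L-\bar L}$, which is exactly where the hypothesis $\bar d_{(n)}>\frac{4}{9}\log(2n/\epsilon)$ is needed to make the variance term dominate and yield $\norm{L-\bar L}\le 3\sqrt{n\log(2n/\epsilon)}$ with probability at least $1-2\epsilon$; then invoke Weyl's inequality $\abs{\lambda_i-\bar\lambda_i}\le\norm{L-\bar L}$ for all $i$ and divide by $n$. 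None of these ingredients (independent edge decomposition, matrix concentration, Weyl) appears in your proposal, so the claimed subgaussian tail you integrate against is assumed rather than proved. Note that the paper itself also does not prove this lemma --- it imports it from \cite{vizuete2021laplacian,garin2024corrections} --- but the task was to supply an argument for the stated lemma, not for the theorem that consumes it.
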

\begin{myproof}{Theorem~\ref{thm:exponential_lambda}}
    The condition $\bar d_{(n)}>\frac{4}{9}\log(2n/\epsilon)$ of Lemma~\ref{lemma:second_eig} implies that $\epsilon$ must satisfy
    $
    \epsilon>2ne^{-\frac{9\bar d_{(n)}}{4}}.
    $
    From \cite{avella2018centrality}, we have that for $n$ large enough, the condition $
    \epsilon>2ne^{-\frac{9\bar d_{(n)}}{4}}
    $ is always satisfied when $\epsilon\in(ne^{-n/5},e^{-1})$, which does not depend on $\bar d_{(n)}$.
    Then, by applying Lemma~\ref{lemma:second_eig} to the second smallest eigenvalue $\mu_2$ we have that for $\epsilon\in(ne^{-n/5},e^{-1})$:
    $$  \Pr{\brk{2\gamma\abs{\mu_2-\bar \mu_2}>6\gamma\sqrt{\frac{\log(2n/\epsilon)}{n}}}}\le 2\epsilon, 
    $$
    which implies:
\begin{equation}\label{eq:prob_exp}
\Pr{\brk{e^{2\gamma\abs{\mu_2-\bar \mu_2}}>e^{6\gamma\sqrt{\frac{\log(2n/\epsilon)}{n}}}}}\le 2\epsilon.    
    \end{equation}
    By making the change of variable $z=e^{6\gamma\sqrt{\frac{\log(2n/\epsilon)}{n}}}$ we have:
    $
    \epsilon=2ne^{-\frac{n\log^2 z}{36\gamma^2}},
    $
    such that \eqref{eq:prob_exp} can be expressed as: \begin{equation}\label{eq:prob_exp_2}
        \Pr{\brk{e^{2\gamma\abs{\mu_2-\bar \mu_2}}>z}}\le 4ne^{-\frac{n\log^2 z}{36\gamma^2}},
    \end{equation}
    for $z\in\prt{e^{6\gamma\sqrt{\frac{\log(2en)}{n}}},e^{6\gamma\sqrt{\frac{n+5\log 2}{5n}}}}$, where this interval of validity of \eqref{eq:prob_exp_2} is obtained from the previous interval of validiy $\epsilon\in(ne^{-n/5},e^{-1})$ applied to the  variable $z$.
    
\noindent    Now, we focus on $\Ep{e^{-2\gamma\mu_2+2\gamma\bar\mu_2}}$, which satisfies:
    \begin{align}
        \Ep{e^{-2\gamma\mu_2+2\gamma\bar\mu_2}}&\le\Ep{e^{2\gamma\abs{-\mu_2+\bar\mu_2}}}\nonumber\\
        &=\int_0^\infty \Pr{\brk{e^{2\gamma\abs{\mu_2-\bar \mu_2}}>z}}dz\nonumber\\
        &=\int_0^{e^{2\gamma}} \Pr{\brk{e^{2\gamma\abs{\mu_2-\bar \mu_2}}>z}}dz\label{eq:integral_1},
    \end{align}
    where we use the fact that $\abs{\mu_2-\bar\mu_2}\in [0,1]$ (since they are normalized eigenvalues \footnote{The largest Laplacian eigenvalue $\lambda_n$ of a simple graph satisfies $\lambda_n\le n $ \cite{van2023graph}. By using Jensen's inequality we have $\norm{\bar L}\le \Ep{\norm{L}}\le n$, where $\norm{\cdot}$ is the spectral norm, which implies $\bar \lambda_n\le n$.}, such that $\Pr{\brk{e^{2\gamma\abs{\mu_2-\bar \mu_2}}>z}}=0$ for $z>e^{2\gamma}$. Now, notice that $e^{6\gamma\sqrt{\frac{n+5\log 2}{5n}}}>e^{2\gamma}$ for all $n$, and $e^{6\gamma\sqrt{\frac{\log(2en)}{n}}}<e^{2\gamma} $ for $n$ large enough (see \eqref{eq_condi4}), such that \eqref{eq:integral_1} can be expressed as:
    \begin{align*}
        \Ep{e^{-2\gamma\mu_2+2\gamma\bar\mu_2}}&\le\int_0^{e^{6\gamma\sqrt{\frac{\log(2en)}{n}}}} \Pr{\brk{e^{2\gamma\abs{\mu_2-\bar \mu_2}}>z}}dz+\\
        &\quad\;\int_{e^{6\gamma\sqrt{\frac{\log(2en)}{n}}}}^{e^{2\gamma}} \Pr{\brk{e^{2\gamma\abs{\mu_2-\bar \mu_2}}>z}}dz\\
        &\le \int_0^{e^{6\gamma\sqrt{\frac{\log(2en)}{n}}}} 1dz+\int_{e^{6\gamma\sqrt{\frac{\log(2en)}{n}}}}^{e^{2\gamma}} 4ne^{-\frac{n\log^2 z}{36\gamma^2}}dz\\
        &=e^{6\gamma\sqrt{\frac{\log(2en)}{n}}}+4n\int_{e^{6\gamma\sqrt{\frac{\log(2en)}{n}}}}^{e^{2\gamma}} e^{-\frac{n\log^2 z}{36\gamma^2}}dz,
    \end{align*}
    where we used \eqref{eq:prob_exp_2}. Now, we focus on the integral $\int_{e^{6\gamma\sqrt{\frac{\log(2en)}{n}}}}^{e^{2\gamma}} e^{-\frac{n\log^2 z}{36\gamma^2}}dz$. We make the change of variable $\alpha=\log z$, which gives $dz=e^\alpha d\alpha$ and we get:
    $$
\int_{e^{6\gamma\sqrt{\frac{\log(2en)}{n}}}}^{e^{2\gamma}} e^{-\frac{n\log^2 z}{36\gamma^2}}dz=
\int_{6\gamma\sqrt{\frac{\log(2en)}{n}}}^{2\gamma} e^{-\frac{n}{36\gamma^2}\alpha^2+\alpha}d\alpha.
    $$
    Then, we use $\int e^{-ax^2+bx}dx=\frac{\sqrt{\pi}e^{b^2/(4a)}\erf\prt{\frac{2ax-b}{2\sqrt{a}}}}{2\sqrt{a}}+C $ for $a>0$, where $\erf(x)=\frac{2}{\sqrt{\pi}}\int_0^xe^{-t^2}dt$ is the error function, to obtain:
    \begin{multline*}
       4n\int_{e^{6\gamma\sqrt{\frac{\log(2en)}{n}}}}^{e^{2\gamma}} e^{-\frac{n\log^2 z}{36\gamma^2}}dz=\\ 
       12\gamma\sqrt{\pi n}e^{9\gamma^2/n}\prt{\erf\prt{\frac{n-9\gamma}{3\sqrt{n}}}-\erf\prt{\frac{\sqrt{n\log(2en)}-3\gamma}{\sqrt{n}}}}.
    \end{multline*}
    By using the fact that the error function satisfies the following bounds $\sqrt{1-e^{-x^2}}\le\erf(x)\le\sqrt{1-e^{-\frac{4}{\pi}x^2}}$ \cite{chu1955bounds}, we obtain \eqref{eq:thm_arrivals_eq1}.
    
    \noindent Now, we analyze the asymptotic behavior of the bound \eqref{eq:thm_arrivals_eq1} for $n\to\infty$. By using the Binomial approximation \footnote{$(1+x)^\alpha\approx 1+\alpha x $ for $\abs{x}<1$ and $\abs{\alpha x}\ll 1$.}, we have:

    \small

    \vspace{-4mm}
    
    \begin{align*}
    \Psi(n)&\approx 12\gamma\sqrt{\pi n}e^{9\gamma^2/n}\prt{\frac{1}{2}e^{-\frac{1}{n}(\sqrt{n\log(2en)}-3\gamma)^2}-\frac{1}{2}e^{-\frac{4}{9\pi n}(n-9\gamma)^2}}\\
    &\approx 6\gamma\sqrt{\pi}e^{9\gamma^2/n}\prt{\sqrt{n}e^{-\frac{1}{n}(\sqrt{n\log(2en)}-3\gamma)^2}-\sqrt{n}e^{-\frac{4}{9\pi n}(n-9\gamma)^2}},
    \end{align*}

    \normalsize

    \vspace{-2mm}
    
    \noindent where $\sqrt{n}e^{-\frac{1}{n}(\sqrt{n\log(2en)}-3\gamma)^2}$ goes to zero as $O(\frac{1}{\sqrt{n}})$ and $\sqrt{n}e^{-\frac{4}{9\pi n}(n-9\gamma)^2}$ decays to zero as $O(\sqrt{n}e^{-n})$. This implies that the rate of convergence to zero of $\Psi(n)$ is $O(\frac{1}{\sqrt{n}})$. Finally, notice that $e^{6\gamma\sqrt{\frac{\log(2en)}{n}}}$ converges to 1 when $n\to\infty$.
\end{myproof}

The bounds of Theorems~\ref{thm:replacement} and \ref{thm:arrivals} are valid if $e^{-2\gamma \bar \mu_2}\prt{e^{6\gamma\sqrt{\frac{\log(2en)}{n}}}+\Psi(n)}<1$, which is expected to hold for a large number of agents.

\subsection{Computation of $\bar \mu_2$ for stochastic block model (SBM) graphons }

With Theorem~\ref{thm:exponential_lambda}, the computation of the bound of Theorem~\ref{thm:replacement} is reduced to the computation of the spectrum of just one matrix of size $n$, while the computation of the bound of Theorem~\ref{thm:arrivals} requires the computation of the spectrum of $n_M-n_m$ matrices of order $n_M$. This can still be computationally inefficient for a large number of agents and that is why, in this section, we will focus on the computation of the spectrum of stochastic block model (SBM) graphons that are used to model community structures \cite{karrer2011stochastic} and also experimentally for the estimation of graphons \cite{airoldi2013stochastic}.

A SBM graphon is a piecewise constant graphon, and thus piecewise Lipschitz, defined as \cite{avella2018centrality}:
\begin{equation}\label{eq:SBM_definition}
  W_{\SBM}(x,y):=\sum_{i=1}^m\sum_{j=1}^m
P_{ij}\chi_{B_i}(x)\chi_{B_j}(y),  
\end{equation}
where $\chi_B(x)$ is the indicator function, $P_{ij}\in[0,1]$, $P_{ij}=P_{ji}$, $\bigcup_{i=1}^mB_i=[0,1]$ and $B_i\cap B_j=\emptyset$. Similarly to \cite{avella2018centrality}, let us define the probability matrix  $P_\SBM:=[P_{ij}]\in\R^{m\times m}$ and the matrix $E_\SBM:=\text{diag}(n_{B_1},\ldots,n_{B_m})$, where $n_{B_i}$ is the number of latent variables in the interval $B_i$. We define the adjacency-type matrix of the SBM graphon as:
$$
A_{\SBM}:=\frac{1}{n}P_\SBM E_\SBM.$$
Notice that the matrix $A_{\SBM}$ is not necessarily symmetric even if the graphon is symmetric.
Related to $A_{\SBM}$, we define the degree matrix $D_\SBM:=\text{diag}(\delta_1,\ldots,\delta_m)$ where each $\delta_i$ is the $i$-th row sum of $A_\SBM$. We denote by $\delta_{\min}$ the minimum degree of $D_{\SBM}$. Then, we define the Laplacian-type matrix of the SBM graphon as:
$$
L_\SBM:=D_{\SBM}-A_{\SBM}.
$$

\begin{proposition}[SBM graphon]\label{prop:SBM}
    For a SBM graphon $W_\SBM$ with $n$ deterministic latent variables:
    $$
\bar\mu_2=\min(\lambda_2( L_{\SBM}),\delta_{\min}),
    $$
    where $\lambda_2(L_{\SBM})$ is the second smallest eigenvalue of $L_{\SBM}$.
\end{proposition}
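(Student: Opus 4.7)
The plan is to exploit the block structure inherited from the SBM graphon and reduce the computation of the spectrum of $\bar L$ to two decoupled pieces. With deterministic latent variables $u_k=k/n$, the expected adjacency matrix is block-constant: $\bar A(k,\ell)=P_{ij}$ whenever $u_k\in B_i$ and $u_\ell\in B_j$. Consequently every node in block $B_i$ has the same expected degree $\bar d_i=\sum_{j}P_{ij}n_{B_j}=n\delta_i$, and $\bar D$ acts as multiplication by $\bar d_i$ on each block.

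I would then decompose $\R^n=V_c\oplus V_z$, where $V_c$ is the $m$-dimensional subspace of vectors constant on each block and $V_z$ is the $(n-m)$-dimensional subspace of vectors whose entries sum to zero on each block, and check that both are invariant under $\bar L$. For $v_z\in V_z$ and $k\in B_i$, the identity $(\bar Av_z)_k=\sum_{j}P_{ij}\sum_{\ell\in B_j}v_z(\ell)=0$ shows $\bar Av_z=0$, so $\bar Lv_z=\bar Dv_z$, which stays in $V_z$ because $\bar D$ is block-scalar. Hence the spectrum of $\bar L$ on $V_z$ consists of the values $\bar d_i=n\delta_i$ with multiplicity $n_{B_i}-1$, giving normalized eigenvalues $\delta_i$. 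For $v_c\in V_c$ parameterized by $w\in\R^m$ through $v_c(k)=w_i$ for $k\in B_i$, a direct computation gives
\[
(\bar Lv_c)_k=\bar d_iw_i-\sum_{j}P_{ij}n_{B_j}w_j=n(L_{\SBM}w)_i\qquad\text{for }k\in B_i,
\]
so $V_c$ is invariant and $\bar L/n$ acts on $V_c$ with the same spectrum as $L_{\SBM}$.

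Combining the two pieces, the spectrum of $\bar L/n$ is the disjoint union of the $m$ eigenvalues of $L_{\SBM}$ and the values $\{\delta_i\}$ with multiplicities $n_{B_i}-1$ (which together account for $m+(n-m)=n$ eigenvalues). Since $L_{\SBM}\mathds{1}_m=0$, the zero eigenvalue comes from the $V_c$ component, so $\bar\mu_2$ is the smallest remaining value across both pieces, which is exactly $\min(\lambda_2(L_{\SBM}),\delta_{\min})$.

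I do not expect a serious obstacle: the main verification is the block-invariance of $V_c$ and $V_z$ and the identification of the reduced action on $V_c$ with $L_{\SBM}$. The only subtle point is ensuring every $\delta_i$ is actually attained in the $V_z$ spectrum; this fails only when some block contains a single node, which is excluded under Definition~\ref{largeN} through condition~\eqref{eq_condi1} (each interval has length greater than $2/n$, hence contains at least two latent variables).
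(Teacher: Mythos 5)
Your proof is correct and follows essentially the same route as the paper: the paper splits $\bar L=\hat L+L_D$ and shows that the nonzero eigenvectors of each summand lie in the kernel of the other, which is exactly your decomposition $\R^n=V_c\oplus V_z$ into block-constant and blockwise sum-zero vectors, yielding after normalization the spectrum $\mathrm{spec}(L_{\SBM})\cup\{\delta_i \text{ with multiplicity } n_{B_i}-1\}$ and hence $\bar\mu_2=\min(\lambda_2(L_{\SBM}),\delta_{\min})$. Your closing caveat about single-node blocks is well spotted: the paper's own proof silently assumes every $n_{B_i}\ge 2$ too, and since the proposition as stated does not invoke Definition~\ref{largeN}, strictly both arguments need that mild condition, which does hold in the large-$n$ regime where the result is actually used.
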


\vspace{2mm}

\begin{proof}
    Let us consider a particular decomposition of the Laplacian matrix of the complete weighted graph $\bar G$:
    $$
    \bar L=\hat L+L_D,
    $$
    where $\hat L$ is a matrix of the form
$$
\hat L=
\begin{bmatrix}
D_{B_1,B_1} & -\bar A_{B_1B_2} & \cdots & -\bar A_{B_1B_m}\\
-\bar A_{B_2B_1} & D_{B_2,B_2} & \cdots & -\bar A_{B_2B_m}\\
\vdots  & \vdots & \ddots & \vdots\\
-\bar A_{B_mB_1} & -\bar A_{B_mB_2}  & \cdots & D_{B_m,B_m}
\end{bmatrix},
$$
where each block $\bar A_{B_iB_j}=P_{ij}\ones_{n_{B_i}}\ones_{n_{B_j}}^T\in\R^{n_{B_i}\times n_{B_j}}$ is determined by the adjacency matrix $\bar A$ and each block $D_{B_iB_i}=\frac{1}{n_{B_i}}\sum_{j\neq i}P_{ij}n_{B_j}\ones_{n_{B_i}}\ones_{n_{B_i}}^T\in\R^{n_{B_i}\times n_{B_i}}$ is a block with the same entries such that the sum of each row is 0. 
The matrix $L_D$ is defined as $L_D=\bar L-\hat L$. Notice that $L_D$ is a block diagonal matrix where each block correspond to the Laplacian matrix of a complete weighted graph. A block $L_{D_{B_i}}$ has off-diagonal weights $-P_{ii}-\frac{1}{n_{B_i}}\sum_{j\neq i}P_{ij}n_{B_j}$, and the weights of the diagonal are $\frac{n_{B_i}-1}{n_{B_i}}\sum_{j\neq i}P_{ij}n_{B_j}+P_{ii}(n_{B_i}-1)$.

First, let us analyze the eigenvalues of the matrix $L_D$. Each block $L_{D_{B_i}}$ has $n_{B_i}-1$ eigenvalues with magnitude $\sum_{j\neq i}P_{ij}n_{B_j}$ associated to eigenvectors with a structure such that only the entries corresponding to the block are non zero and the sum of all the entries are zero (since they are eigenvalues and eigenvectors of a complete graph). Let us consider an arbitrary nonzero eigenvalue $\lambda_{L_D}$ with an eigenvector $v_{L_D}$. Notice that $\hat Lv_{L_D}=0$, since the block of $\hat L$ that multiplies the nonzero entries of $v_{L_D}$ has the same entries. This implies that all the eigenvectors $v_{L_D}$ of $L_D$ associated with nonzero eigenvalues $\lambda_{L_D}$ satisfy $v_{L_D}\in \text{ker}(\hat L)$. Therefore, the matrix $\bar L$ has for each $i\in\{1,\ldots,m\}$, $n_{B_i}-1$ eigenvalues of magnitude $\sum_{j\neq i}P_{ij}n_{B_j}$, which correspond to the degrees (eigenvalues)  of the matrix $D_{\SBM}$ multiplied \linebreak by $n$.

\noindent Now, let us analyze the eigenvalues of the matrix $\hat L$. An eigenvector $v_{\hat L}$ associated with a nonzero eigenvalue $\lambda_{\hat L}$ of $\hat L$ must have a structure of the form $v_{\hat L}=[v_{\hat L}^{(1)}\ones_{n_{B_1}}^T,v_{\hat L}^{(2)}\ones_{n_{B_2}}^T,\dots,v_{\hat L}^{(m)}\ones_{n_{B_m}}^T]^T$
where each $v_{\hat L}^{(i)}$ is repeated $n_{B_i}$ times. In addition, notice that $L_Dv_{\hat L}=0$, since each block $L_{D_{B_i}}$ multiplies the same entry $v_{\hat L}$. This implies that all the eigenvectors $v_{\hat L}$ of $\hat L$ associated with nonzero eigenvalues $\lambda_{\hat L}$ satisfy $v_{\hat L}\in \text{ker}(L_D)$ and hence, the nonzero eigenvalues of $\hat L$ are also eigenvalues of $\bar L$. This implies that the eigenvalue $\lambda_{\hat L}$ associated with the eigenvector $v_{\hat L}$ must be also an eigenvalue of the smaller matrix:
$$
\begin{bmatrix}
\sum_{j\neq 1}P_{1j}n_{B_j} & -P_{12} n_{B_2}  & \cdots & - P_{1m}n_{B_m}\\
-P_{12}n_{B_1} & \sum_{j\neq 2}P_{2j}n_{B_j} &  \cdots & -P_{2m}n_{B_m}\\
\vdots  & \vdots & \ddots & \vdots\\
-P_{1m}n_{B_1} & -P_{2m}n_{B_2} & \cdots & \sum_{j\neq m}P_{mj}n_{B_m}
\end{bmatrix},    
$$
which corresponds to the matrix $L_{\SBM}$ multiplied by $n$. Then, we have that $m-1$ normalized eigenvalues of $\bar L$ are determined by the eigenvalues of $L_{\SBM}$, and $n-m$ normalized eigenvalues of $\bar L$ are determined by the eigenvalues of $D_{\SBM}$. The last eigenvalue of $\bar L$ is the trivial zero of any Laplacian matrix.
\end{proof}

The appearance of the minimum degree $\delta_{\min}$ as a a possible value for $\bar \mu_2$ in Proposition~\ref{prop:SBM} has a clear link with the convergence of normalized Laplacian eigenvalues to the spectrum of infinite dimensional operators associated with graphons. For a graphon $W$, we can define the Laplacian operator $\mathcal{L}_W:L^2[0,1]\to L^2[0,1]$ as:
\begin{equation}\label{eq:laplacian-operator}
(\mathcal{L}_Wf)(x):=d(x)f(x)- \int_0^1 W(x,y)f(y)dy.   
\end{equation}
This operator has an essential spectrum located in the range of the degree function $d(x)$ and a finite number of isolated eigenvalues $\kappa_i$ (see \cite{vizuete2021laplacian} for more details).

\begin{proposition}[\cite{vizuete2021laplacian}]\label{convergence}
Let $W$ be a continuous graphon.
Let $M$  be the number of isolated eigenvalues of $\mathcal{L}_W$ in the interval $[0,\eta_W]$,  
counted with their multiplicities, and let $0 = \kappa_1 \le \kappa_2 \le \dots \le \kappa_M$ be such eigenvalues.
Define $\rho = \kappa_2$ if $M \ge 2$, and $\rho = \eta_W$ if $M = 1$.
Then, 
$$
\lim_{n \to \infty} \mu_2 = \rho, \qquad \mathrm{a.s.}
$$
\end{proposition}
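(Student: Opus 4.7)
The plan is to use the triangle inequality to write $|\mu_2 - \rho| \le |\mu_2 - \bar\mu_2| + |\bar\mu_2 - \rho|$ and control each term separately: the first by the concentration inequality in Lemma~\ref{lemma:second_eig} combined with a Borel--Cantelli argument, and the second by a deterministic spectral-convergence argument for the discretization of the Laplacian graphon operator $\mathcal{L}_W$ defined in \eqref{eq:laplacian-operator}.

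For the deterministic piece, I would first establish the structure of $\sigma(\mathcal{L}_W)$. Writing $\mathcal{L}_W = M_d - T_W$, where $M_d$ is multiplication by the continuous degree function $d$ and $T_W$ is the integral operator with kernel $W$, one has that $M_d$ is bounded self-adjoint with spectrum $[\eta_W, \sup d]$ and that $T_W$ is Hilbert--Schmidt, hence compact and self-adjoint. Weyl's theorem on the invariance of the essential spectrum under compact self-adjoint perturbations then gives $\sigma_{\mathrm{ess}}(\mathcal{L}_W) = [\eta_W, \sup d]$, so that $\sigma(\mathcal{L}_W) \cap [0, \eta_W)$ consists of finitely many isolated eigenvalues of finite multiplicity; the identity $\mathcal{L}_W \mathds{1} = 0$ places $\kappa_1 = 0$ there. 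In both cases ($M \ge 2$ and $M = 1$), $\rho$ is by definition the smallest point of $\sigma(\mathcal{L}_W)$ strictly above~$0$.

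Next I would show $\bar\mu_2 \to \rho$ by identifying $\bar L / n$ with the operator $\mathcal{L}_{W_n}$ induced by the piecewise-constant graphon $W_n(x,y) = W(u_i, u_j)$ on each rectangle $I_i \times I_j$. Continuity of $W$ on the compact $[0,1]^2$ upgrades pointwise convergence of $W_n$ to $W$ to uniform convergence, which in turn yields operator-norm convergence $T_{W_n} \to T_W$ and $M_{d_n} \to M_d$, hence $\mathcal{L}_{W_n} \to \mathcal{L}_W$ in operator norm. Standard spectral stability for self-adjoint operators under norm perturbation then forces the finitely many ordered eigenvalues lying strictly below $\eta_W$ to converge in order, and also ensures that $\eta_W$ is recovered as the appropriate cluster point when $\rho = \eta_W$. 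In either regime this gives $\bar\mu_2 \to \rho$.

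For the stochastic piece, I would invoke Lemma~\ref{lemma:second_eig} with the explicit choice $\epsilon_n = 1/n^2$. Continuity of $W$ keeps $\bar d_{(n)}/n$ bounded away from $0$, so $\bar d_{(n)}$ grows linearly in $n$ and eventually dominates $(4/9)\log(2n^3) = O(\log n)$, so that the hypothesis of the lemma is satisfied and $\epsilon_n$ lies in the admissible range $(ne^{-n/5}, e^{-1})$ for $n$ large. The conclusion is $|\mu_2 - \bar\mu_2| \le 3\sqrt{\log(2n^3)/n}$ with probability at least $1 - 2/n^2$; since $\sum_n 1/n^2 < \infty$, Borel--Cantelli gives $|\mu_2 - \bar\mu_2| \to 0$ almost surely, and combining with the deterministic step concludes the proof. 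The main obstacle lies in the deterministic step: one must simultaneously handle order-preserving convergence of isolated eigenvalues \emph{and} the boundary case $\rho = \eta_W$ at the edge of the essential spectrum, where naive perturbation arguments can fail; it is continuity of $W$, and the resulting operator-norm convergence $\mathcal{L}_{W_n} \to \mathcal{L}_W$, that makes the requisite spectral-stability theorems applicable and pins down $\bar\mu_2$ to the correct limit $\rho$ rather than to some other point of the essential spectrum.
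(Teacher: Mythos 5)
First, note that the paper does not prove this statement: Proposition~\ref{convergence} is imported verbatim from \cite{vizuete2021laplacian}, so there is no in-paper proof to compare against. Your two-step strategy --- (i) $\abs{\mu_2-\bar\mu_2}\to 0$ a.s.\ via Lemma~\ref{lemma:second_eig} with $\epsilon_n=1/n^2$ and Borel--Cantelli, and (ii) $\bar\mu_2\to\rho$ via operator-norm convergence of the step-graphon Laplacian to $\mathcal{L}_W$ --- is the natural route and, as far as this reviewer can tell, is essentially the strategy of the cited reference. The stochastic step is fine: for a nontrivial continuous $W$ the maximum expected degree grows linearly in $n$, so the hypothesis of Lemma~\ref{lemma:second_eig} holds eventually and $3\sqrt{\log(2n^3)/n}\to 0$.

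The one place where your deterministic step has a genuine gap is the sentence ``identifying $\bar L/n$ with the operator $\mathcal{L}_{W_n}$.'' This identification is not exact: decomposing $L^2[0,1]$ into the piecewise-constant functions $V_n$ and their orthogonal complement, one finds $V_n^\perp$ is invariant, $T_{W_n}$ annihilates it, and $\mathcal{L}_{W_n}$ acts there as multiplication by the step degree function. Hence $\sigma(\mathcal{L}_{W_n})=\sigma(\bar L/n)\cup\{\bar d_i/n\}_{i=1}^n$, with the degree values carrying infinite multiplicity, and the second smallest point of $\sigma(\mathcal{L}_{W_n})$ is $\min\prt{\bar\mu_2,\,\delta_{\min,n}}$ with $\delta_{\min,n}=\min_i \bar d_i/n$, not $\bar\mu_2$ itself. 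Your spectral-stability argument therefore only shows that this minimum converges to $\rho$. To extract $\bar\mu_2\to\rho$ you need the additional observations that $\delta_{\min,n}\to\eta_W$ (uniform convergence of the degree functions) and that $\bar\mu_2\le\frac{n}{n-1}\delta_{\min,n}$: when $\rho<\eta_W$ the minimum is eventually attained by $\bar\mu_2$, and when $\rho=\eta_W$ both candidates are squeezed to $\eta_W$. This is precisely the dichotomy $\bar\mu_2=\min(\lambda_2(L_{\SBM}),\delta_{\min})$ that Proposition~\ref{prop:SBM} makes explicit for SBM graphons, so the fix is available but must be stated; without it the boundary case $\rho=\eta_W$ that you flag as delicate is not actually closed. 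A minor further caveat: Weyl's theorem alone does not give that the isolated eigenvalues in $[0,\eta_W)$ are \emph{finitely} many (they could a priori accumulate at $\eta_W$); finiteness of $M$ is an assumption inherited from \cite{vizuete2021laplacian}, but this does not affect the convergence of $\kappa_2$.
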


\vspace{2mm}

Proposition~\ref{convergence} provides the limit of the sequence of $\mu_2$, and hence of $\bar\mu_2$ in the case of continuous graphons, but it does not provide a rate of convergence that would allow us to estimate $\bar\mu_2$ for a given $n$. Nevertheless, similar to Proposition~\ref{prop:SBM}, it specifies the importance of the infimum of the degree function $d(x)$ (minimum degree of $D_{\SBM}$), as a potential value for $\bar\mu_2$.
In the case of SBM graphons, the range of the degree function is given by a set of measure zero, and that is why it can be isolated from potential eigenvalues as it has been shown by Proposition~\ref{prop:SBM}.

Finally, notice that the bound of Theorem~\ref{thm:arrivals} implies the computation of the spectrum of several matrices. However, with Proposition~\ref{prop:SBM}, this computation is reduced to the spectrum of matrices of size $m$ that depends only on the graphon and not on the number of agents, which is practical for open multi-agent systems over large graphs.

\section{Conclusions and future work}

In this paper, we showed that graphons can be used as a tool to analyze open multi-agent systems when the network topologies are sampled from a graphon. This required the analysis of the Laplacian spectrum of graphs sampled from graphons, which is not often done, since the theory of graphons is related to the adjacency matrix. 

We have some results highlighting the importance of graphons in open multi-agent systems, but there are important extensions in this direction.

It would be interesting to extend the results to the case of sampling using stochastic latent variables \cite{avella2018centrality}, which can be more appropriate for the case of open multi-agent systems. Also, it would be important to analyze the case of dependence on the sequence of graph topologies sampled during the events (arrivals, departures, replacements). Furthermore, the analysis of the stochastic case when the occurrence of the events is determined by a stochastic process (e.g., renewal processes) is an interesting area of research. Finally, it would be
important to extend the results to more general classes of graphons that cannot be decomposed as SBM graphons.

\bibliographystyle{IEEEtran}
\bibliography{arxiv}

\end{document}